%
%
%
\def\UseBibLatex{1}


\ifx\ESAVer\undefined%
\newcommand{\InESAVer}[1]{}%
\newcommand{\InRegVer}[1]{#1}%
\else
\newcommand{\InESAVer}[1]{#1}%
\newcommand{\InRegVer}[1]{}%
\fi

\InRegVer{%
   \documentclass[12pt]{article}%
}
\InESAVer{%
   \documentclass[11pt]{article}%
}

\InESAVer{%
}

\usepackage{proofapnd}

\providecommand{\BibLatexMode}[1]{}
\providecommand{\BibTexMode}[1]{#1}

\ifx\UseBibLatex\undefined%
  \renewcommand{\BibLatexMode}[1]{}
  \renewcommand{\BibTexMode}[1]{#1}
\else
  \renewcommand{\BibLatexMode}[1]{#1}
  \renewcommand{\BibTexMode}[1]{}
\fi

\BibLatexMode{%
   \usepackage[style=alphabetic,backend=biber,bibencoding=utf8]{biblatex}%
   \usepackage{sariel_biblatex}%
}

\InESAVer{%
   \newcommand{\MedSkip}{}%
}
\InRegVer{%
   \newcommand{\MedSkip}{\medskip}%
}

\usepackage[english]{babel}
\usepackage{euscript}
\usepackage{graphicx}

\IfFileExists{sariel_computer.sty}{\def\sarielComp{1}}{}
\ifx\sarielComp\undefined%
\newcommand{\SarielComp}[1]{}
\newcommand{\NotSarielComp}[1]{#1}%
\else
\newcommand{\SarielComp}[1]{#1}%
\newcommand{\NotSarielComp}[1]{}%
\fi
\newcommand{\IfPrinterVer}[2]{#2}%

\InESAVer{%
}

\InESAVer{%
   \usepackage{fullpage}%
}
\InRegVer{%
   \usepackage[cm]{fullpage}%
}

\usepackage{amsmath}%
\usepackage{amssymb}%
\usepackage{xcolor}%

\SarielComp{\usepackage{sariel_colors}}%

\usepackage[amsmath,thmmarks]{ntheorem}%
\theoremseparator{.}%

\usepackage{titlesec}%
\titlelabel{\thetitle. }%
\usepackage{xcolor}%
\usepackage{mleftright}%
\usepackage{xspace}%
\usepackage{hyperref}%
\usepackage{stmaryrd}%
\usepackage{multirow}%

\IfPrinterVer{%
   \usepackage{hyperref}%
}{%
   \usepackage{hyperref}%
   \hypersetup{%
      breaklinks,%
      ocgcolorlinks, colorlinks=true,%
      urlcolor=[rgb]{0.25,0.0,0.0},%
      linkcolor=[rgb]{0.5,0.0,0.0},%
      citecolor=[rgb]{0,0.2,0.445},%
      filecolor=[rgb]{0,0,0.4},
      anchorcolor=[rgb]={0.0,0.1,0.2}%
   }
}

\theoremseparator{.}%

\theoremstyle{plain}%
\newtheorem{theorem}{Theorem}[section]

\newtheorem{lemma}[theorem]{Lemma}

\newtheorem{corollary}[theorem]{Corollary}

\newtheorem{observation}[theorem]{Observation}
\newtheorem*{observation:u}[FakeCounter]{Observation}

\theoremstyle{plain}%
\theoremheaderfont{\sf} \theorembodyfont{\upshape}%
\newtheorem*{remark:unnumbered}[FakeCounter]{Remark}%
\newtheorem{definition}[theorem]{Definition}
\newtheorem{defn}[theorem]{Definition}

\newcommand{\myqedsymbol}{\rule{2mm}{2mm}}

\theoremheaderfont{\em}%
\theorembodyfont{\upshape}%
\theoremstyle{nonumberplain}%
\theoremseparator{}%
\theoremsymbol{\myqedsymbol}%
\newtheorem{proof}{Proof:}%

\newcommand{\atgen}{\symbol{'100}}%
\newcommand{\SarielThanks}[1]{%
   \thanks{%
      Department of Computer Science; %
      University of Illinois; 201 N. Goodwin Avenue; Urbana, IL,
      61801, USA; {\tt sariel\atgen{}illinois.edu}; {\tt
         \url{http://sarielhp.org/}.} #1}}

\newcommand{\JoeThanks}[1]{%
   \thanks{%
      Department of Computer Science; %
      University of Illinois; %
      201 N. Goodwin Avenue; Urbana, IL, 61801, USA; %
      {\tt jwrogge2\atgen{}illinois.edu}; %
      #1}}

\newcommand{\HLink}[2]{\hyperref[#2]{#1~\ref*{#2}}}
\newcommand{\HLinkSuffix}[3]{\hyperref[#2]{#1\ref*{#2}{#3}}}

\newcommand{\figlab}[1]{\label{fig:#1}}
\newcommand{\figref}[1]{\HLink{Figure}{fig:#1}}

\newcommand{\thmlab}[1]{{\label{theo:#1}}}
\newcommand{\thmref}[1]{\HLink{Theorem}{theo:#1}}

\newcommand{\corlab}[1]{\label{cor:#1}}
\newcommand{\corref}[1]{\HLink{Corollary}{cor:#1}}%

\newcommand{\obslab}[1]{\label{observation:#1}}
\newcommand{\obsref}[1]{\HLink{Observation}{observation:#1}}

\newcommand{\seclab}[1]{\label{sec:#1}}
\newcommand{\secref}[1]{\HLink{Section}{sec:#1}}

\newcommand{\itemlab}[1]{\label{item:#1}}
\newcommand{\itemref}[1]{\HLinkSuffix{}{item:#1}{}}

\newcommand{\lemlab}[1]{\label{lemma:#1}}
\newcommand{\lemref}[1]{\HLink{Lemma}{lemma:#1}}%

\providecommand{\deflab}[1]{\label{def:#1}}
\newcommand{\defref}[1]{\HLink{Definition}{def:#1}}

\providecommand{\eqlab}[1]{}%
\renewcommand{\eqlab}[1]{\label{equation:#1}}

\providecommand{\remove}[1]{}%
\newcommand{\Set}[2]{\left\{ #1 \;\middle\vert\; #2 \right\}}
\newcommand{\pth}[2][\!]{\mleft({#2}\mright)}%

\newcommand{\ceil}[1]{\left\lceil {#1} \right\rceil}
\newcommand{\floor}[1]{\left\lfloor {#1} \right\rfloor}

\newcommand{\cardin}[1]{\left| {#1} \right|}%

\renewcommand{\th}{th\xspace}

\renewcommand{\Re}{\mathbb{R}}%
\usepackage[inline]{enumitem}

\newlist{compactenumA}{enumerate}{5}%
\setlist[compactenumA]{topsep=0pt,itemsep=-1ex,partopsep=1ex,parsep=1ex,%
   label=(\Alph*)}%

\newlist{compactenuma}{enumerate}{5}%
\setlist[compactenuma]{topsep=0pt,itemsep=-1ex,partopsep=1ex,parsep=1ex,%
   label=(\alph*)}%

\newlist{compactenumI}{enumerate}{5}%
\setlist[compactenumI]{topsep=0pt,itemsep=-1ex,partopsep=1ex,parsep=1ex,%
   label=(\Roman*)}%

\newlist{compactenumi}{enumerate}{5}%
\setlist[compactenumi]{topsep=0pt,itemsep=-1ex,partopsep=1ex,parsep=1ex,%
   label=(\roman*)}%

\newlist{compactenumi*}{enumerate*}{5}%
\setlist[compactenumi*]{topsep=0pt,itemsep=-1ex,partopsep=1ex,parsep=1ex,%
   label=(\roman*)}%

\newlist{compactitem}{itemize}{5}%
\setlist[compactitem]{topsep=0pt,itemsep=-1ex,partopsep=1ex,parsep=1ex,%
   label=\ensuremath{\bullet}}%

\definecolor{blue25emph}{rgb}{0, 0, 11}
\providecommand{\emphic}[2]{%
   \textcolor{blue25emph}{%
      \textbf{\emph{#1}}}%
   \index{#2}}

\providecommand{\emphi}[1]{\emphic{#1}{#1}}

\definecolor{almostblack}{rgb}{0, 0, 0.3}
\providecommand{\emphw}[1]{{\textcolor{almostblack}{\emph{#1}}}}%

\providecommand{\Mh}[1]{#1}%
\newcommand{\Term}[1]{\textsf{#1}}

\DeclareFontFamily{U}{BOONDOX-calo}{\skewchar\font=45 }
\DeclareFontShape{U}{BOONDOX-calo}{m}{n}{
  <-> s*[1.05] BOONDOX-r-calo}{}
\DeclareFontShape{U}{BOONDOX-calo}{b}{n}{
  <-> s*[1.05] BOONDOX-b-calo}{}
\DeclareMathAlphabet{\mathcalb}{U}{BOONDOX-calo}{m}{n}
\SetMathAlphabet{\mathcalb}{bold}{U}{BOONDOX-calo}{b}{n}
\DeclareMathAlphabet{\mathbcalb}{U}{BOONDOX-calo}{b}{n}

\newcommand{\XX}{\Mh{\EuScript{X}}}%
\newcommand{\metric}{\mathsf{d}}%

\newcommand{\PS}{\Mh{P}}%
\newcommand{\PSB}{\Mh{Q}}%
\newcommand{\PSC}{\Mh{S}}%

\newcommand{\CL}{\Mh{C}}

\newcommand{\dmY}[2]{\metric\pth{#1, #2}}%
\newcommand{\pp}{\Mh{p}}%
\newcommand{\pq}{\Mh{q}}%
\newcommand{\cpX}[1]{\Mh{\mathrm{c{}p}}\pth{#1}}%
\newcommand{\diamX}[1]{\mathrm{diam}\pth{#1}}%

\newcommand{\spreadC}{\Mh{\Phi}}%
\newcommand{\spreadX}[1]{\Mh{\Phi}\pth{#1}}%

\newcommand{\ball}{\Mh{\mathsf{b}}}%
\newcommand{\ballA}{\Mh{\mathsf{B}}}%

\newcommand{\WSPD}{\Term{WSPD}\xspace}

\newcommand{\sep}{\Mh{\mathcalb{s}}}%

\newcommand{\hc}{\Mh{\square}}%
\newcommand{\eps}{{\varepsilon}}%

\newcommand{\ropt}{\Mh{r_{\mathrm{opt}}}}

\newcommand{\pa}{\Mh{p}}%
\newcommand{\pb}{\Mh{q}}%
\newcommand{\normX}[1]{\left\| {#1} \right\|}
\newcommand{\BallSet}{\Mh{\mathcal{B}}}%
\providecommand{\IntRange}[1]{\mleft\llbracket #1 \mright\rrbracket}
\newcommand{\IRX}[1]{\IntRange{#1}}%

\newcommand{\npts}{\Mh{\gamma}}%
\newcommand{\roptY}[2]{\Mh{r_{\mathrm{opt}}}\pth{#1, #2}}%
\newcommand{\radiusX}[1]{\mathrm{radius}\pth{#1}}%
\newcommand{\eraX}[1]{\Mh{\mathrm{era}}\pth{#1}}%
\newcommand{\Clustering}{\Mh{\mathcal{C}}}%
\newcommand{\kk}{\Mh{\mathcalb{k}}}%

\newcommand{\Grid}{\Mh{G}}%

\numberwithin{figure}{section}%
\numberwithin{table}{section}%
\numberwithin{equation}{section}%


\BibLatexMode{\bibliography{good_pairs}}

\begin{document}

\title{On Clusters that are Separated but Large}

\InRegVer{%
   \author{Sariel Har-Peled%
      \SarielThanks{Work on this paper was partially supported by a
         NSF AF award CCF-1907400.}%
      \and%
      Joseph Rogge%
      \JoeThanks{}%
   }%
   \date{\today} }

\maketitle

\begin{abstract}
    Given a set $\PS$ of $n$ points in $\Re^d$, consider the problem
    of computing $\kk$ subsets of $\PS$ that form clusters that are
    well-separated from each other, and each of them is large
    (cardinality wise).  We provide tight upper and lower bounds, and
    corresponding algorithms, on the quality of separation, and the
    size of the clusters that can be computed, as a function of
    $n,d,\kk,\sep$, and $\spreadC$, where $\sep$ is the desired
    separation, and $\spreadC$ is the spread of the point set $\PS$.
\end{abstract}

\InESAVer{%
   \thispagestyle{empty}%
   \newpage%
   \pagenumbering{arabic} }%

\section{Introduction}

Clustering is one of the fundamental problems in data and computer
science. We consider a variant of clustering where we are interested
in computing clusters that are tight and well-separated in relation to
each other. Unlike the classical settings, we do not require the
clusters to cover all the points, and instead we want the clusters to
be as large as possible, while providing the desired separation
properties.  One can interpret our problem as a variant of clustering
with noise (or outliers) -- which is a notoriously hard problem
\cite{dhs-pc-01}.

\begin{figure}[h!]
    \begin{tabular}{c|c|c}
      \includegraphics[page=1,width=0.3\linewidth]{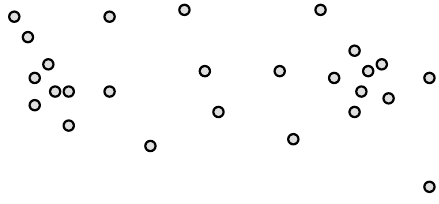}
      &
        \includegraphics[page=2,width=0.3\linewidth]{figs/well_separated}
      &
        \includegraphics[page=3,width=0.3\linewidth]{figs/well_separated}
    \end{tabular}
    \caption{A point set, two well-separated subsets (i.e., clusters),
       and the two associated balls.}
    \figlab{well:separated}
\end{figure}

\noindent%
\textbf{On different notions of separation.} %
In our settings a cluster is simply a subset of the points.  The
\emphw{size} of a cluster is the number of points in it, and when
computing a collection of clusters, the
\emphw{quality} of clustering is the cardinality of the smallest
cluster among the clusters computed.  A desired property is that
clusters are separated from each other.  Specifically, the distance
between any pair of clusters is some function of their diameters.  For
example, two sets $\CL_1, \CL_2 \subseteq \Re^d$ are \emphw{well
   $\sep$-separated} if

\begin{equation*}
    \dmY{\CL_1}{\CL_2}%
    \geq%
    \sep \max\bigl( \diamX{ \CL_1}, \diamX{\CL_2}\bigr),
\end{equation*}

\noindent%
where $\dmY{\CL_1}{\CL_2}$ is the minimum distance between points in
the two sets, and $\diamX{ \CL_i}$ is the diameter of $\CL_i$.  Here
$\sep>0$ is the \emphw{separation} parameter, and the larger it is,
the more separated the clusters are. An alternative way to view such
well-separated sets is to consider the two smallest enclosing balls of
the two clusters and require that these two balls are far from each
other, see \figref{well:separated}. If we demand that both clusters
are of large size, than there is a trade-off between the separation of
the clusters, and their quality (i.e., the minimum number of points in
either cluster).

A somewhat weaker notion is
\emphw{semi $\sep$-separation}, where we require that for the two
clusters $\CL_1, \CL_2$, we have
\begin{equation*}
    \dmY{\CL_1}{\CL_2}
    \geq%
    \sep \min\bigl( \diamX{ \CL_1}, \diamX{\CL_2} \bigr),
\end{equation*}
see \figref{semi} for an example.

If one considers more than two clusters, say $\CL_1, \ldots, \CL_\kk$,
then one can further strengthen the notion of separation, requiring
that the distance between clusters is determined by the cluster with
the largest diameter. Formally, these $\kk$ clusters are
\emphw{strongly $\sep$-separated} if
\begin{equation*}
    \forall i\neq j \qquad \dmY{\CL_i}{\CL_j} \geq \sep \cdot \max_{\ell=1}^k
    \diamX{\PS_\ell}.
\end{equation*}

\begin{figure}[b!]
    \centerline{\includegraphics{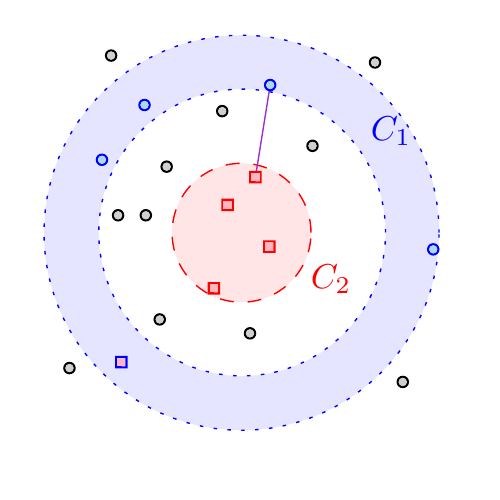}}%
    \caption{Two clusters that are semi-separated.}
    \figlab{semi}
\end{figure}

These three notions of separations, for more than two clusters, are
illustrated in \figref{separation}.

\begin{figure}[h!]
    \centerline{
       \begin{tabular}{*{1}{c}c}
         \begin{minipage}{0.4\linewidth}
             \centering \includegraphics[page=1]{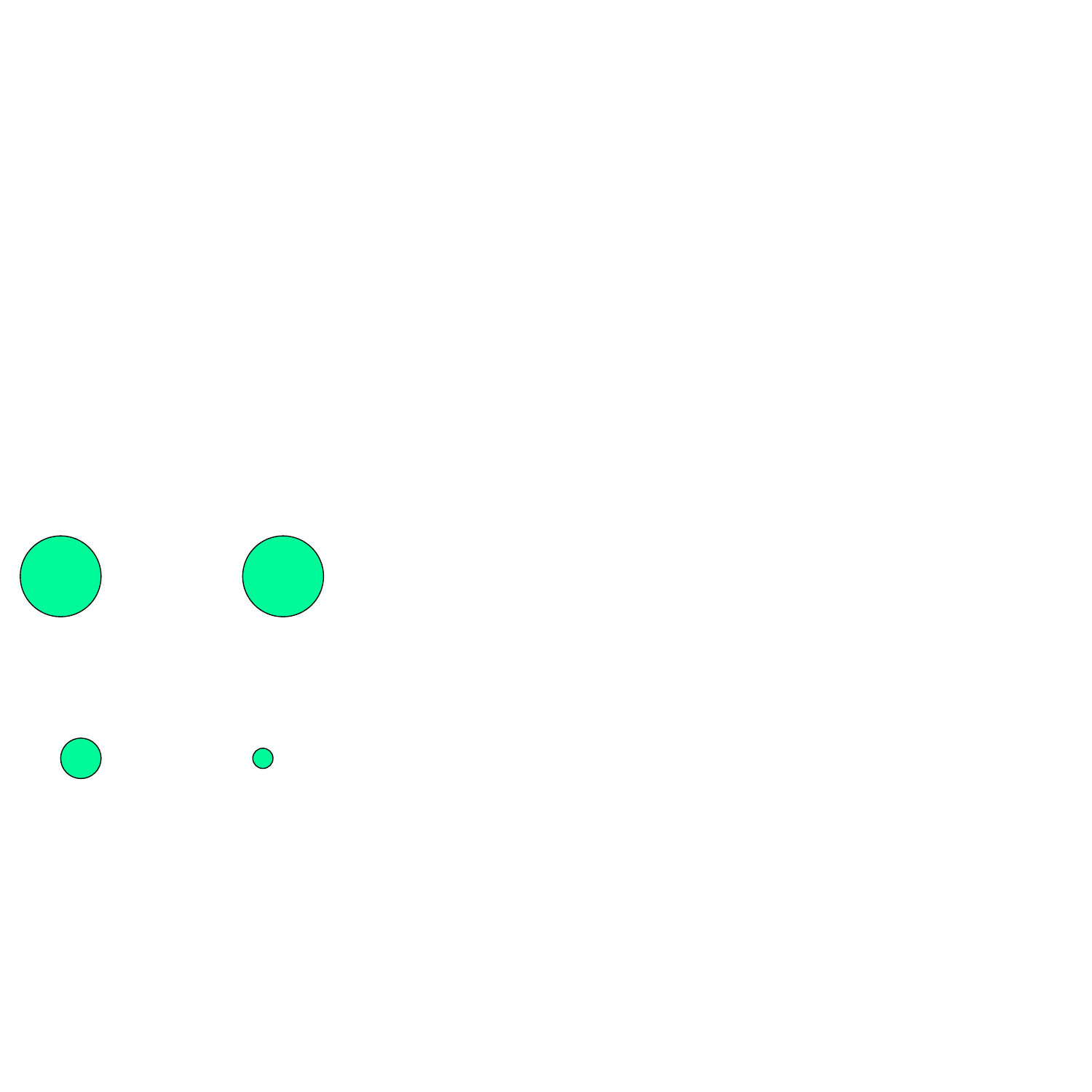}
         \end{minipage}
         &
           \qquad%
           \begin{minipage}{0.45\linewidth}
               (A) Strongly separated -- the pairwise distances
               between clusters are decided by the largest cluster. %
         \end{minipage}%
           \qquad\phantom{}%
         \\
         \hline
         \begin{minipage}{0.45\linewidth}
             \smallskip%
             \centering%
             \includegraphics[page=2]{figs/separation}%
             \smallskip%
         \end{minipage}
         &
         \begin{minipage}{0.45\linewidth}
             (B) Well separated -- the distance of separation between
             a pair of clusters is decided by the bigger cluster in
             the pair. %
         \end{minipage}
         \\
         \hline
         \begin{minipage}{0.4\linewidth}
             \centering%
             \includegraphics[page=3]{figs/separation}%
         \end{minipage}
           &
             \begin{minipage}{0.45\linewidth}
                 (C) Semi separation -- the distance between pair of
                 clusters is decided by the smaller cluster. Thus a
                 cluster might be contained ``inside'' a different
                 cluster.

             \smallskip
         \end{minipage}
       \end{tabular}%
    } \vspace{-0.7cm}%
    \caption{The different notions of separation.  }
    \figlab{separation}
\end{figure}

\paragraph{Previous work on separation in computational geometry.}

Callahan and Kosaraju \cite{ck-dmpsa-95} showed that a set $\PS$ of
$n$ points in $\Re^d$ can be decomposed into $O(\sep^d n)$ pairs, such
that all pairs of points are covered by some pair in the
decomposition, and all pairs are $\sep$-separated. This decomposition
is known as \emphw{well-separated pairs decomposition} (\WSPD). There
is also work on semi-separated pair decomposition, where one can get a
near linear bound on the total size of listing all the pairs
explicitly \cite{v-dcamc-98,ah-ncsa-12}. Both notions are widely used
in geometric approximation algorithms, as they provide compact
representation of the metric structure of the point set.

\paragraph{Separation in clustering.}

The ratio $\min_{i <j } \dmY{\CL_i}{\CL_j} / \max_{t} \diamX{\CL_t}$
is known as the \emphw{Dunn index}, and dates back to the work of Dunn
from 1973 \cite{d-fripu-73}, and is used to measure the quality of
clustering (this corresponds to the notion of strong $\sep$-separation
defined above). There are a lot of other measures of quality of
clustering, including the \emphw{Davies–Bouldin} and
\emphw{Silhouette} indices, among others. Such indices are used in
cluster analysis \cite{ells-ca-11}. Assumptions that lead to stable
clustering can be interpreted as constraints on the separation between
the ``true'' clusters \cite{l-cso-10}.

\paragraph{The task at hand.}
Here, we investigate the trade-off between the quality of separation
(under the different notions) and clusters quality/size, and provide
bounds quantifying it and algorithms for computing such good
clusterings.

\paragraph{Our results.}
In the following $\PS$ is a set of $n$ points in $\Re^d$, $\sep$ is
the desired separation, $\kk$ is the number of clusters, and
$\spreadC$ is the spread of $\PS$.  We remind the reader that for
$\kk$ clusters $\CL_1. \ldots, \CL_\kk \subseteq \PS$, the quality of
clustering is the size of the smallest cluster, that is
$\min_i \cardin{\CL_i}$.  We start with two easy upper bounds on the
quality of clustering computed.
\begin{compactenumA}
    \MedSkip{}
    \item \itemlab{spread} \textbf{Quality must depend on the spread
       (well/strong separation).}
    In \lemref{l:b:spread}, we show that the standard exponential
    point set, implies that even for two clusters
    $\CL_1,\CL_2 \subseteq \PS$ in one dimension, the quality of the
    clustering can be at most $O(n / \log \spreadC)$.

    \MedSkip{}%
    \item \itemlab{grid} \textbf{Quality drops  with the dimension.} %
    In \lemref{l:b:spread:u}, we show that the natural grid in $\Re^d$
    implies that under any notion of $\sep$-separation, the quality
    can be at most $O(n/\sep^d)$.
\end{compactenumA}
\MedSkip{}%
In particular, in high dimension, one can not get anything useful in
the worst case:
\smallskip%
\begin{compactenumA}[resume]
    \item \textbf{Quality drops exponentially with dimension.}
    Using the Johnson-Lindenstrauss lemma, we show an almost uniform
    point-set in $O( \log n)$ dimensions, such that any $2$-separated
    clusters are useless (i.e., the smaller cluster of the two
    contains a single point).  See
    \lemref{l:b:high:dim}.
\end{compactenumA}
We next combine \itemref{spread} and \itemref{grid}, to get a more
nuanced upper bound:%
\MedSkip{}
\begin{compactenumA}[resume]
    \item \textbf{A stronger upper bound using exponential grid.}
    In \secref{u:b:exp:grid}, we show that a carefully constructed
    exponential grid, implies that the quality of any two clusters
    that are well $\sep$-separated is bounded by
    $O\pth{ n / (\sep^d \log \spreadC) }$. Intuitively, this is to be
    expected from combining the above two examples, but the details
    are somewhat involved and require care.

    This upper bound construction also leads to an improved bound for
    the case of $\kk$ clusters, showing that for such clusters to be
    well/strongly $\sep$-separated, in the worst case, the quality is
    bounded by
    \begin{math}
        O\pth{ n / ( \kk \sep^d \log \spreadC) }.
    \end{math}
    See
    \corref{exp:k}.
\end{compactenumA}
\MedSkip{}%
Next, in \secref{algorithms} we study algorithms for computing such
clusterings. %
\MedSkip{}%
\begin{compactenumA}[resume]
    \item \textbf{Algorithm for computing semi $\sep$-separated $\kk$
       clusters.}
    In \lemref{h:k:sep}, we show how to compute $\kk$ clusters that
    are semi $\sep$-separated, and the quality of the computed
    clusters is $\Omega\bigl(n / (\kk \sep^d ) \bigr)$. This matches
    the above upper bound of \lemref{l:b:spread:u}.

    \MedSkip{}%
    \item \textbf{Algorithm for computing well/strong $\sep$-separated $\kk$
       clusters.}
    In \secref{quorum}, we present an algorithm to compute $\kk$
    clusters that are strongly (and thus also well) $\sep$-separated,
    and the quality of the clusters is
    $\Omega\bigl(n / (\kk \sep^d \log \spreadC) \bigr)$. This matches
    the upper bound of \corref{exp:k} mentioned above.

    \MedSkip{}%
    \item \textbf{The colored version.} %
    We also study the colored variant, where we are given $\kk$ sets
    $\PS_1, \ldots, \PS_\kk \subseteq \Re^d$ (each with $n$ points),
    and the task at hand is to compute clusters
    $\CL_1, \ldots, \CL_\kk$ that are $\sep$-separated, and
    $\CL_i \subseteq \PS_i$, for all $i$. Fortunately, a variant of
    the uncolored algorithm works in the colored case, and yields
    the same bound for the colored semi $\sep$-separated case. See
    \lemref{h:k:sep:colored}.

    Unfortunately, for three or more colors no useful clustering is
    possible if one wants strong separation, see \secref{hopeless}.

    The only remaining case is the colored well-separate case, where
    one can compute $\kk$ clusters, each of size
    $\Omega \bigl( {n}/(\kk \sep^d \log{\spreadC}) \bigr)$, see
    \lemref{k:well:sep}. Thus, the three notions of separations have
    different behaviors.
\end{compactenumA}
\MedSkip{}%
The results are summarized in \figref{results}.

\begin{figure}[t!]
    \centering%
    \begin{tabular}{|l|c|c|c|c|c|c|}
      \hline
      Notion separation
      & Bound
      & Ref
      & Dim%
      & \# cl.
      &Sep.
      &Comment
      \\
      \hline\hline%
      strong/well
      &
        $O( n /\log \spreadC)\Bigr.$
      &
        \lemref{l:b:spread}%
      &
        one
      &
        $2$
      &
        $1$
      &
        \multirow{4}{*}{%
        \begin{minipage}{2.7cm}
            \smallskip
            Upper bounds on the size of the clusters in the worst case
            \smallskip
        \end{minipage}%
      }
      \\
      \cline{1-6}
      semi%
      &
        $O\bigl( n/(\kk \sep^d) \bigr)\Bigr.$
      &
        \lemref{l:b:spread:u}%
      &
        $d$
      &
        $\kk$
      &
        $\sep$
      &
      \\[0.002cm]
      \cline{1-6}
      strong/well
      &
        $O\pth{ n / ( \kk \sep^d \log \spreadC ) }\Bigr.$
      &
        \corref{exp:k}%
      &
      &
      &
      &
      \\
      \cline{1-6}
      strong/well/semi
      &
        $\leq n/\exp\bigl( \Omega(d) \bigr)\Bigr.$
      &
        \lemref{l:b:high:dim}
      &
        high
      &
        $2$
      &
        $1$
      &
      \\[0.1cm]
      \hline
      \hline
      Strong
      $\sep$%
      &
        $\Omega_d\bigl( \frac{n}{ \kk \sep^d \log \spreadC} \bigr)\Bigr.$
      &
        \thmref{s:sep:k:d}%
      &
        $d$
      &
        $\kk$
      &
        $\sep$
      &
        Constructive
      \\
      \hline
      Semi $\sep$-separated
      &
        $\Omega\bigl(n / (\kk \sep^d ) \bigr)\Bigr.$
      &
        \lemref{h:k:sep:colored}%
      &
      &
        $\kk$
      &
        $\sep$
      &
        Constructive\\
      \hline
    \end{tabular}
    \caption{A summary of results.}
    \figlab{results}
\end{figure}

\paragraph{Techniques used.}
For the algorithms, we use as basic building blocks two tools: (i)
fast approximation algorithm for smallest enclosing ball, and (ii)
quorum clustering.

\paragraph{Connection to Ramsey theory.}

We are addressing here a natural question of finding large subset(s)
of the data that have a good structure that is better than the one
that holds for the whole input. A classical example of such a question
is finding the largest clique in the graph (i.e., Ramsey
numbers). There is also work in finite metric spaces showing that
there is always a subset that is a ``better'' metric space
\cite{blmn-omrtp-05}.

\paragraph{Paper organization.}

We start by formally defining the different notions of separation in
\secref{defs}. We then describe, in \secref{tools}, the two main
algorithmic building blocks -- tighter ball extraction and quorum
clustering. For quorum clustering we prove a key property about their
density in \lemref{depth:limit}.  In \secref{upper} we present the
various upper bounds on the quality of clustering under the various
notions of separations.  The algorithms are presented in
\secref{algorithms}.
We conclude in \secref{conclusions}.

\section{Preliminaries}
\seclab{prelims}

\subsection{Definitions}
\seclab{defs}

\begin{definition}
    A \emphw{metric space} $(\XX, \metric)$ is a set $\XX$ equipped
    with a metric $\metric$.  For two sets $X,Y \subseteq \XX$, their
    \emphi{distance} is
    $\dmY{X}{Y}=\min_{x \in X, y \in Y} \dmY{x}{y}$.  The
    \emphi{closest pair} distance of a set of points
    $\PS \subseteq \XX$, is
    \begin{math}
        \cpX{\PS} = \min_{\pp, \pq \in \PS, \pp \neq \pq}
        \dmY{\pp}{\pq}.
    \end{math}
    The \emphi{diameter} of $\PS$ is %
    \begin{math}
        \diamX{\PS} = \max_{\pp, \pq \in \PS} \dmY{\pp}{\pq}.
    \end{math}
    The \emphi{spread} of $\PS$ is
    $\spreadX{\PS} = \diamX{\PS} / \cpX{\PS}$, which is the ratio
    between the diameter and closest pair distance.

\end{definition}

\begin{defn}
    \deflab{separation}%
    Let $\PS$ be a set of points in a metric space $(\XX, \metric)$.
    Consider $\kk$ sets $\PS_1, \ldots, \PS_\kk \subseteq \PS$, and a
    parameter $\sep > 0$. The sets $\PS_1, \ldots, \PS_\kk$ are:
    \begin{compactitem}
        \smallskip%
        \item \emphi{strongly $\sep$-separated}
        if for all distinct
        $i,j$, we have
        \begin{math}
            \dmY{\PS_i}{\PS_j} \geq \sep \cdot \max_{\ell=1}^k
            \diamX{\PS_\ell}.
        \end{math}

        \smallskip%
        \item \emphi{well $\sep$-separated} if for all distinct $i,j$,
        we have
        \begin{math}
            \dmY{\PS_i}{\PS_j} \geq \sep \cdot
            \max\bigl(\diamX{\PS_i}, \diamX{\PS_j} \bigr).
        \end{math}

        \smallskip%
        \item \emphi{semi $\sep$-separated} if for all distinct $i,j$,
        we have
        \begin{math}
            \dmY{\PS_i}{\PS_j}%
            \geq%
            \sep \cdot \min \bigl( \diamX{\PS_i}, \diamX{\PS_j}
            \bigr).%
        \end{math}
    \end{compactitem}
    \smallskip%
    The \emphi{quality} of the collection is $\min_i \cardin{\PS_i}$.
    Such a collection of sets is \emphi{useless} if
    $\cardin{\PS_i } = 1$ for some $i$ (i.e., the quality is one).
\end{defn}

\begin{observation:u}
    For $\kk=2$, strong separation and well separation are the same.
    If a pair of sets $X,Y$ is $c$-separated, then it is
    $c'$-separated for all $c' \leq c$.
\end{observation:u}

\subsection{Basic tools}
\seclab{tools}

\subsubsection{Tight ball extraction}

\begin{theorem}[\cite{hr-nplta-15}]
    \thmlab{main:m:disk}%
    Given a set $\PS$ of $n$ points in $\Re^d$ and a parameter
    $\alpha$, one can compute, in expected linear time and with high
    probability, a ball $\ball$ of radius $r$, such that
    $\ropt(\PS,\alpha) \leq r \leq 2\ropt(\PS,\alpha)$, where
    $\ropt(\PS,\alpha)$ is the minimum radius of a ball covering
    $\alpha$ points of $\PS$. Furthermore, we have that $\ball$
    contains at least $\alpha$ points of $\PS$.
\end{theorem}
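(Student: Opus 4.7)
The plan is to combine random sampling with grid-based counting, organized as a net-and-prune iteration. The first step is scale estimation: draw a uniform random sample $R \subseteq \PS$ of size $\Theta(n/\alpha)$. With constant probability, $R$ contains at least one point of the optimal $\alpha$-covering ball $\ball^*$ of radius $\ropt = \ropt(\PS,\alpha)$; any such $p \in R \cap \ball^*$ has at least $\alpha$ points of $\PS$ within distance $2\ropt$. Hence it would suffice to find, for some sampled $p$, the smallest $r_p$ such that the ball of radius $r_p$ around $p$ contains at least $\alpha$ points of $\PS$; the minimum of such $r_p$ over $p \in R$ lies in $[\ropt, 2\ropt]$, and the corresponding ball is the output.

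The second step is efficient counting via grids. A direct $\alpha$-th nearest-neighbor computation per sample point would be too expensive, so instead, for each candidate radius $r$ from a geometric sequence, I would impose a grid of side length $\Theta(r/\sqrt{d})$, so that any ball of radius $r$ meets only $O_d(1)$ grid cells. After hashing each point into its cell in linear time, for each $p \in R$ and each candidate $r$ one can count the points in the $O_d(1)$-cell neighborhood of $p$ in constant time. The smallest $r$ for which some $p \in R$ yields a count of at least $\alpha$ provides a $2$-approximation.

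The third step, and the main obstacle, is to avoid paying separately for every candidate scale and every sample point. I would wrap the above in a net-and-prune iteration: in each round, compute a net at the currently guessed scale, use the net either to report an answer or to prune every point that is certifiably not in any ball of radius at most $2\ropt$ containing $\alpha$ input points, and recurse on the survivors. A Clarkson-Shor style backward analysis should then show that each round removes a constant fraction of the points in expectation, yielding total expected linear work. The delicate part is the interplay between the three parameters (sample size, candidate-scale spacing, and pruning aggressiveness): one must ensure that the scale estimation and pruning each cost only linear in the surviving set while the pruning shrinks the problem geometrically without ever discarding the optimal ball. High-probability correctness then follows by standard amplification, being careful that the amplification does not destroy the linear-time bound.
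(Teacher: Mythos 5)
The paper does not prove this theorem: it is stated as a black-box citation from \cite{hr-nplta-15} (the ``net and prune'' paper of Har-Peled and Raichel), and there is no proof in the text or the appendix to compare your attempt against.

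That said, your sketch is recognizably the net-and-prune framework that the cited reference develops precisely for problems of this form, so you have identified the right route. What is missing is the part that actually makes it work. Two concrete gaps. First, the scale-estimation step as you state it is not linear time: computing the $\alpha$-nearest-neighbor radius $r_p$ for each of $\Theta(n/\alpha)$ sampled points costs $\Theta(n)$ per point, i.e.\ $\Theta(n^2/\alpha)$ overall, and your proposed fix (grid at a geometric sequence of candidate scales) reintroduces a $\Theta(\log\spreadC)$ factor. The actual framework avoids both by sampling a \emph{single} random point $p$, computing its $\alpha$-NN radius $\ell_p$ once in $O(n)$ time, and then working only at scale $\ell_p$ in that round -- no sweep over scales is needed because $\ell_p$ is, with constant probability, already within a constant factor of $\ropt$. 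Second, the pruning step -- which is the heart of the argument -- is only named (``Clarkson-Shor style backward analysis''), not argued. You need to say what gets pruned (cells whose $3^d$-neighborhood in the $\ell_p$-grid holds fewer than $\alpha$ points, together with a net replacing the rest), prove that this never removes the optimal ball, and prove that if $\ell_p$ is not already a good approximation then a constant fraction of points survive with probability bounded away from $1$, which is where the expected-linear-time recurrence comes from. You correctly flag that ``the delicate part is the interplay between the three parameters,'' but flagging the difficulty is not the same as resolving it; as written, the proposal is an accurate table of contents for the proof rather than a proof.
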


\subsubsection{Quorum clustering and some properties}

Let $\PS$ be a set of $n$ points in $\Re^d$. Consider the process
that, in the $i$\th iteration, does the following:
\begin{compactenumI}
    \smallskip%
    \item Computes the a ball $\ball_i$ that contains $\geq \gamma$
    points of $\PS$, where $r_i = \radiusX{\ball_i} \leq 2\rho_i$,
    where $\rho_i = \roptY{\PS}{\gamma}$ is the radius of he smallest
    ball containing $\npts$ points of $\PS$.

    \smallskip%
    \item Select exactly $\npts$ points of $\PS \cap \ball_i$ into a
    new set $\PS_i$.

    \smallskip%
    \item $\PS \leftarrow \PS \setminus \PS_i$.
\end{compactenumI}
\MedSkip{}%
This process is repeated till $\PS$ is an empty set, and assume that
$\ball_m$ and $\PS_m$ are the last ball and set computed.  (The last
set $\PS_m$ might contain less than $\npts$ points.)

The resulting partition is known as \emphi{quorum clustering}, and it
can be computed in $O( n \log n )$ time \cite{cdhks-gqsa-05,
   hr-nplta-15} (for a constant dimension $d$).

\begin{defn}
    The \emphi{era} starting at location $i$, is the longest
    subsequence $r_i, r_{i+1}, \ldots, r_j$ such that
    \begin{equation*}
        \max( r_i, r_{i+1}, \ldots, r_{j}) \leq 4r_i.
    \end{equation*}
    Let $\eraX{i}$ denote the index ending the epoch starting at $i$
    (e.g., above we have $\eraX{i} = j$).
\end{defn}

\newcommand{\epochS}{\Mh{\mathcal{B}}}%

Starting at the first location $r_1$, this naturally partitions the
quorum clustering into epochs. Setting $f(1) = \eraX{1}$, and
$f(i) = \eraX{\bigl. f(i-1) +1}$, for $i>1$.  The \emphi{$i$\th epoch}
is the subsequence $r_{f(i-1)+1}, \ldots, r_{f(i)}$.  The sequence of
sets in the $i$\th epoch is
$\epochS_i = \{ \PS_{f(i-1)+1}, \ldots, \PS_{f(i)} \}$.

\begin{observation}
    \obslab{monotone}%
    (A) Let $\rho_i$ be the radius of the smallest ball containing
    $\gamma$ (unclustered) points in the beginning of the $i$\th
    epoch. We have that $\rho_1 \leq \rho_2 \leq \cdots \leq \rho_m$.
    Let $r$ be a radius of any ball in $\epochS_i$ containing $\gamma$
    points. By the definition
    of $\rho_i$, we have
    \begin{equation*}
        \rho_i \leq r \leq 4r_i \leq 8\rho_i.
    \end{equation*}
    Therefore, the radius of any such ball in $\epochS_i$ lies in
    $[r/8, 8r]$.

    (B) For all $i$, we have $\rho_{i+1}\geq (4r_i/2) \geq 2 \rho_i$.
\end{observation}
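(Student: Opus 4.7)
The plan is to prove each part by carefully unpacking the definitions of $\rho_i$ (the minimum radius of a $\npts$-covering ball at the start of the $i$\th epoch), $r_i$ (the radius of the first ball extracted in the $i$\th epoch, for which the approximation guarantee $r_i \leq 2\rho_i$ holds), and the era condition.

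For part (A), I would first establish the monotonicity $\rho_1 \leq \rho_2 \leq \cdots \leq \rho_m$ by observing that the unclustered point set at the start of epoch $i+1$ is a subset of the one at the start of epoch $i$, so any $\npts$-covering ball available at epoch $i+1$ was already available at epoch $i$; minimizing over a smaller family can only increase the optimum. Next, for a ball of radius $r$ containing $\npts$ remaining points at some iteration inside the $i$\th epoch, the same ball witnesses a $\npts$-covering configuration at the start of the epoch (its $\npts$ points are a subset of the unclustered set at that earlier moment), yielding $r \geq \rho_i$. The upper bound $r \leq 4r_i$ is immediate from the definition of era, and then $r_i \leq 2\rho_i$ from the approximate ball extraction (\thmref{main:m:disk}) gives $r \leq 8\rho_i$. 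Combining the two sides for any pair of radii $r, r'$ in the epoch yields $r' \in [r/8, 8r]$.

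For part (B), the argument is that if epoch $i$ terminates at iteration $f(i)$ and epoch $i+1$ exists, then by maximality of the era the next extracted radius must satisfy $r_{f(i)+1} > 4 r_i$ (otherwise the era would have been extended). Applying the approximation guarantee to the ball extracted at the first iteration of epoch $i+1$ gives $r_{f(i)+1} \leq 2\rho_{i+1}$, so $\rho_{i+1} \geq r_{f(i)+1}/2 \geq 4 r_i / 2 = 2 r_i$. Finally, since $r_i$ is the radius of a ball covering $\npts$ of the points unclustered at the start of epoch $i$, we have $r_i \geq \rho_i$, and the chain $\rho_{i+1} \geq 2 r_i \geq 2\rho_i$ follows.

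There is no genuine obstacle here; the proof is essentially bookkeeping. The only place where care is required is the overloaded indexing (iterations versus epochs) and making sure that each inequality uses the correct quantity: the lower bound on $r$ in (A) is a minimality statement about $\rho_i$ applied to the subset $\PS$ at the epoch's start, while the upper bound combines the era definition with the $2$-approximation guarantee, and part (B) is the same approximation guarantee applied one iteration after the era breaks.
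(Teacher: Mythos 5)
Your proof is correct and follows exactly the reasoning that the paper leaves implicit in the observation's statement (the paper itself presents this as an observation with inline justification rather than a separate proof). You correctly disentangle the two overloaded meanings of $r_i$, establish monotonicity of $\rho_i$ via shrinking point sets, derive $r \geq \rho_i$ by the witness argument, obtain $r \leq 4r_i$ from the era definition and $4r_i \leq 8\rho_i$ from the $2$-approximation, and deduce (B) from the maximality of the era together with the $2$-approximation guarantee at the first iteration of the next epoch.
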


The above implies that the quorum clustering has at most
$O(\log \spreadC)$ epochs, and the following lemma implies that balls
that are in the same epoch are sparse -- they can not cover a point
too many times.

\begin{lemma}
    \lemlab{depth:limit}%
    Let $\epochS_i$ be the set of balls computed in the $i$\th
    epoch. Any point $\pa \in \Re^d$ is contained in at most
    $d^{O(d)}$ balls of $\epochS_i$.
\end{lemma}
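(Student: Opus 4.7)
The plan is to combine two complementary facts via a packing argument. From \obsref{monotone}(A), every ball in $\epochS_i$ has radius at most $8\rho_i$, so if $\pa$ lies in such a ball, both the ball itself and the $\npts$-point cluster it produced are contained in $\ballA\pth{\pa, 16\rho_i}$. On the other hand, the very definition of $\rho_i$ as the minimum radius of a ball covering $\npts$ unclustered points at the start of epoch $i$ means that any ball of radius strictly smaller than $\rho_i$ contains fewer than $\npts$ of those start-of-epoch unclustered points.

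First I would fix $\pa \in \Re^d$ and enumerate the balls $\ball_{j_1}, \ldots, \ball_{j_t}$ of $\epochS_i$ that contain $\pa$. The key observation is that the corresponding selected sets $\PS_{j_1}, \ldots, \PS_{j_t}$ are pairwise disjoint (they are peeled off from $\PS$ one at a time), each has size exactly $\npts$ (with the possible exception of the overall last cluster $\PS_m$, which at worst contributes one extra ball to our count), and each one is contained in $\ballA\pth{\pa, 16\rho_i}$ by the radius bound above. Consequently, $\ballA\pth{\pa, 16\rho_i}$ contains at least $(t-1)\npts$ points that were unclustered when epoch $i$ began.

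Next I would upper bound this count by a standard grid covering: $\ballA\pth{\pa, 16\rho_i}$ can be covered by $2^{O(d)}$ balls of radius $\rho_i/2$, and each such small ball contains strictly fewer than $\npts$ points from the start-of-epoch-$i$ unclustered set, by the minimality of $\rho_i$. Summing gives at most $2^{O(d)} \npts$ such points in $\ballA\pth{\pa, 16\rho_i}$. Combining the two inequalities yields $t - 1 \leq 2^{O(d)}$, hence $t \leq d^{O(d)}$.

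The only mildly delicate bookkeeping is verifying that every $\PS_{j_\ell}$ truly is a subset of the points that were unclustered at the start of epoch $i$; this follows immediately from the epoch definition, since all of $\ball_{j_1}, \ldots, \ball_{j_t}$ are produced during epoch $i$ and therefore remove their points only after its start. Beyond that, the argument is a clean packing bound and I do not anticipate any substantial obstacle.
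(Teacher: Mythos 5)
Your proof is correct and uses essentially the same packing argument as the paper: bound the radii of all balls in $\epochS_i$ by $O(\rho_i)$ via \obsref{monotone}, observe that the disjoint $\npts$-point sets peeled off by balls covering $\pa$ all lie in a small ball around $\pa$, and contradict this via a covering by balls of radius below $\rho_i$ (each necessarily containing fewer than $\npts$ start-of-epoch points). The only cosmetic differences are that you cover by balls rather than a grid of cubes (incidentally yielding a slightly sharper $2^{O(d)}$ bound) and phrase the conclusion as a direct inequality rather than the paper's contradiction with the radius of the epoch's first ball.
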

\begin{proof:in:appendix}{\lemref{depth:limit}}

    Fix arbitrary $\pa \in \PS$ and let $r$ denote the radius of the
    last ball in $\BallSet_i$. By definition, and by
    \obsref{monotone}, all the balls of $\BallSet_i$ have radius in
    the range $[r/8,8r]$.

    Consider the (hyper)cube $\hc$ of side length $32r$ centered at
    $\pa$. Any ball of $\BallSet_i$ that covers $\pa$ is contained
    inside $\hc$. Consider the partition of $\hc$ into a grid of cells
    with sidelength $(r/\ceil{\smash{32\sqrt{d}})}\Bigr.$. Formally we
    partition $\hc$ into a grid of $c^d$ equal sizes cubes, where
    $c \leq 2 + { \smash{16r / (r/32\sqrt{d})} } = O(\sqrt{d})$. Every
    cell in this grid has diameter at most $r/32$, and is contained as
    such in a ball of radius $r/16$. If the number of balls in
    $\BallSet_i$ that covers $\pa$ exceeds $c^d$, then the $\hc$
    contains at least $(c^d + 1) \npts$ points in the beginning of the
    epoch. Namely, one of the gird cell contains at least $\npts$
    points at this point in time. Namely,
    $\roptY{\PS}{\npts} \leq r/16$, which implies that the first ball
    in this epoch must have radius $<r/8$. But this is impossible.
\end{proof:in:appendix}

\section{Upper bounds on quality of clustering}
\seclab{upper}

\subsection{Upper bound for the uniform case}

\begin{lemma}
    \lemlab{l:b:spread:u}%
    Let $n> 0 $ be an integer number, such that $N = n^{1/d}$ is an
    integer.  Consider the grid point set
    $\PS = \IRX{ N}^d \subseteq \Re^d$, and parameters $\kk$ and
    $\sep> 0$, where $\IRX{N} = \{1,\ldots, N\}$.  Any strong, well or
    semi $\sep$-separated $\kk$ clusters
    $\CL_1, \ldots, \CL_\kk \subseteq \PS$ have the property that
    $\min_i \cardin{\CL_i} = O( n/(\kk \sep^d) )$.
\end{lemma}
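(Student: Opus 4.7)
Since strong $\Rightarrow$ well $\Rightarrow$ semi separation (the right-hand sides involve $\max$ versus $\min$ of the two diameters), it suffices to prove the bound under the weakest assumption, namely semi $\sep$-separation; the other two cases then follow immediately. Let $m := \min_i \cardin{\CL_i}$; the goal is $m = O_d\pth{n/(\kk \sep^d)}$.

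The plan is a standard packing argument executed in three short steps. First, a Euclidean ball of radius $r$ in $\Re^d$ contains at most $O_d(r^d+1)$ lattice points, so any cluster $\CL_i \subseteq \IRX{N}^d$ with at least $m$ points must have diameter $D_i \geq c_d\, m^{1/d}$ for a suitable constant $c_d = c_d(d) > 0$; set $D := c_d\, m^{1/d}$, so that $D_i \geq D$ for every $i$. Second, semi $\sep$-separation gives $\dmY{\CL_i}{\CL_j} \geq \sep \min(D_i, D_j) \geq \sep D$ whenever $i \neq j$. Picking an arbitrary representative $p_i \in \CL_i$ therefore produces $\kk$ points of $\IRX{N}^d$ whose pairwise distances are all at least $\sep D$.

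Third, partition $\IRX{N}^d$ into axis-parallel sub-cubes of side $\sep D/\sqrt{d}$. Each sub-cube has diameter exactly $\sep D$ and therefore contains at most one representative, yielding
\[ \kk \leq \pth{ \sqrt{d}\, N/(\sep D) + 1 }^d. \]
For $\kk \geq 2$ the entire box cannot fit inside a single sub-cube, so $\sep D \leq \sqrt{d}\, N$, and the packing bound simplifies to $\kk = O_d\pth{(N/(\sep D))^d} = O_d\pth{n/(\sep D)^d}$. Rearranging and using $D^d \geq c_d^d\, m$ together with $N^d = n$ produces $m = O_d\pth{n/(\kk \sep^d)}$, as required.

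There is no real obstacle; the argument is essentially dimensional bookkeeping. The only subtle point is making the constant $c_d$ from the lattice-point count explicit enough to survive into the final $O_d$-estimate, and noting that the degenerate regime $\sep D > \sqrt{d}\, N$ is precisely the regime in which the packing argument forces $\kk \leq 1$, so the bound holds (vacuously) in that corner case as well.
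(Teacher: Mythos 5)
Your proof is correct and follows essentially the same packing argument as the paper: lower-bound each cluster's diameter from its cardinality, use semi-separation (the weakest of the three notions, which suffices) to deduce that chosen representatives are pairwise far apart, and then pack. The paper closes the argument by assigning every grid point to its nearest cluster and lower-bounding each cluster's Voronoi share by roughly $(\ell/2)^d$ points, whereas you partition $\IRX{N}^d$ into sub-cubes of side $\sep D/\sqrt{d}$ and count at most one representative per cell --- an interchangeable phrasing of the same volume estimate.
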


\begin{proof}
    Let $\CL_1, \ldots, \CL_\kk$ be such a clustering. Let
    $t = \ceil{ c n/(\kk \sep^d)}$ for a sufficiently large constant
    $c$, and assume, for the sake of contradiction, that
    $\cardin{\CL_i} \geq t$, for all $i$.  A grid set $S$ with
    diameter $\ell$ is contained inside a hypercube of sidelength
    $\ell+1$, and thus $|S| \leq (\ell+1)^d \leq 2^d \ell^d$
    points. We conclude that $\diamX{S} \geq |S|^{1/d} /2$.  This
    implies that
    $\diamX{\CL_i} \geq t^{1/d} /2 \geq \beta n^{1/d}/( \kk^{1/d}
    \sep)$ points (if $c$ is sufficiently large), for all $i$, where
    $\beta$ is a constant to be specified shortly.  Now, we have that
    \begin{equation*}
        \dmY{\CL_i}{\CL_j} %
        \geq%
        \sep \min( \diamX{\CL_i}, \diamX{\CL_j} )%
        \geq%
        \sep \cdot \beta n^{1/d}/( \kk^{1/d} \sep)
        =%
        \beta n^{1/d}/ \kk^{1/d}.
    \end{equation*}
    Let $\ell = \bigl(\beta/(2\sqrt{d})\bigr) n^{1/d}/ \kk^{1/d}$.
    Assign each point of the grid $\IRX{N}^d$ to the cluster closest
    to it (resolve equality in an arbitrary fashion). It is easy to
    verify that each cluster gets assigned at least $(\ell/2)^d$
    points.  This would imply that $(\ell/2)^d\kk \leq n$, which fails
    if $\beta$ is sufficiently large, as
    \begin{equation*}
        (\ell/2)^d\kk%
        =%
        \pth{
           \frac{\beta/(2\sqrt{d}) n^{1/d}}{2 \kk^{1/d}}
        }^d \kk
        \geq %
        \frac{\beta}{4^d {d^d}} n.
    \end{equation*}
\end{proof}

\subsection{Upper bound with inverse logarithmic %
   dependency on the spread}

We start by constructing an exponentially spaced point set admitting
only useless clusterings. Thus, any bound on the quality of clustering
must have (inverse) logarithmic dependency on the spread of the point
set.

\begin{lemma}
    \lemlab{l:b:spread}%
    There exists a point set $\PS$ of $n$ points in the real line,
    with spread $2^{n+1}$, such that all strongly (or well)
    $1$-separated pairs are useless. Namely, for any two clusters
    $\CL_1, \CL_2 \subseteq \PS$ that are strongly $1$-separated we
    have that $\min \bigl( \cardin{\CL_1}, \cardin{\CL_2} \bigr) = 1$.
\end{lemma}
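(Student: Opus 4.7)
The plan is to take the exponentially spaced point set $\PS = \brc{ 2^i \mid i = 1, \ldots, n } \subseteq \Re$. Then $\cpX{\PS} = 2^2 - 2^1 = 2$ and $\diamX{\PS} = 2^n - 2$, so $\spreadX{\PS} = 2^{n-1} - 1 < 2^{n+1}$, giving the claimed spread bound. The remaining task is to verify that every strongly $1$-separated pair in $\PS$ is useless.

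I would argue by contradiction: assume $\CL_1, \CL_2 \subseteq \PS$ are strongly $1$-separated with $\cardin{\CL_1}, \cardin{\CL_2} \geq 2$. The first step is a short one-dimensional observation that rules out interleaving. If some point $q \in \CL_{3-i}$ lies strictly between two points $p_1 < p_2$ of $\CL_i$, then $\dmY{\CL_1}{\CL_2} \leq \min(q-p_1, p_2-q) \leq (p_2 - p_1)/2 \leq \diamX{\CL_i}/2$, contradicting $\dmY{\CL_1}{\CL_2} \geq \diamX{\CL_i}$. Hence the convex hulls of the two clusters are disjoint on the line, and I may assume every point of $\CL_1$ lies strictly to the left of every point of $\CL_2$.

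Write $\CL_2 = \brc{ 2^{j_1} < 2^{j_2} < \cdots < 2^{j_t} }$ with $t \geq 2$, and let $2^{i_s} = \max \CL_1$, so $i_s < j_1$. Strong $1$-separation yields
\begin{equation*}
    2^{j_1} - 2^{i_s}
    \;=\; \dmY{\CL_1}{\CL_2}
    \;\geq\; \diamX{\CL_2}
    \;=\; 2^{j_t} - 2^{j_1},
\end{equation*}
which rearranges to $2^{j_1+1} \geq 2^{j_t} + 2^{i_s} > 2^{j_t}$, forcing $j_t \leq j_1$ and contradicting $t \geq 2$.

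The main obstacle, or rather the key insight driving the construction, is that on this exponential ladder any subset of size at least two has diameter comparable to its largest element, so there is no room to place a second two-point cluster to its right while maintaining strong separation. Once that structural fact is isolated, the argument collapses to the two-line calculation above, and the non-interleaving step is just the standard observation that on the line a cluster of diameter $D$ cannot be strongly separated from another cluster that sticks a point into its convex hull.
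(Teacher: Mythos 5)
Your proposal is correct and takes essentially the same approach as the paper: an exponentially spaced ladder on the line, followed by the observation that the rightmost cluster cannot contain two points because the gap to its left is too small relative to its own diameter. The only cosmetic differences are that you use $2^i$ in place of the paper's $p_i = 2^{i+1}-1$, and you spell out the non-interleaving step that the paper dispatches with a ``without loss of generality.''
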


\begin{proof}
    For $i=1,\ldots, n$, let $p_i = \sum_{j=0}^i 2^j = 2^{i+1} - 1$, and
    let $\PS = \{ p_1, \ldots, p_n\}$. The distance
    $|p_i - p_{i+1}| = |p_0 - p_i| - 1$. Consider two sets
    $B_1, B_2 \subseteq \PS$ that are $1$-separated, where, without loss
    of generality, all the points of $B_2$ are bigger than all the points
    of $B_1$.  By the $1$-separation, we have that
    $\diamX{B_2} \leq \dmY{B_1}{B_2} = \max B_1 - \min B_2$.  As such,
    if $p_i, p_{i+\Delta} \in B_2$, for some $\Delta > 0$, then we
    have
    \begin{equation*}
        \max B_1%
        \leq
        \min B_2 - \diamX{B_2}%
        \leq%
        p_i - (p_{i+\Delta} - p_i)%
        =%
        2\cdot 2^{i+1} - 2 - 2^{i+\Delta + 1} + 1%
        \leq%
        -1,
    \end{equation*}
    which is impossible. It follows that the set $B_2$ can contain
    only a single point.  Namely, this pair is useless.
\end{proof}

In light of this disappointing example, we restrict ourselves to
bounds that depends on the spread of $\PS$, when looking for
well-separation.

\begin{figure}[t!]
    \centerline{\includegraphics{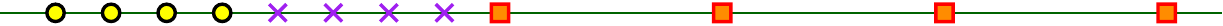}}
    \caption{No heavy triplet is possible, if there are three colors.}
    \figlab{3:colors}
\end{figure}

\subsection{Strong separation %
   is hopeless for three colors}
\seclab{hopeless}

\begin{lemma}
    There are three sets $\PS_1, \PS_2, \PS_3$ of $n$ points each on
    the real line, such that $\PS = \bigcup_i \PS_i$ has spread
    $O(n^2)$. Furthermore, for any strong $3$-separated clustering
    $\CL_1, \CL_2, \CL_3$, with $\CL_i \subseteq \PS_i$, for
    $i=1,2,3$, we have that $\min_i \cardin{\CL_i} = 1$.
\end{lemma}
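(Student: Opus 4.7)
The plan is to exhibit three $n$-point color classes $\PS_1, \PS_2, \PS_3$ on the real line whose union has spread $O(n^2)$ and that obstructs every heavy strongly $3$-separated triple. A natural starting construction is three shifted arithmetic progressions $\PS_i = \{(i-1) + kL : k = 0, \ldots, n-1\}$ with common step $L = \Theta(n)$, yielding union spread $(n-1)L + 2 = O(n^2)$ and enforcing that any two distinct same-color points lie at distance at least $L$.

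Suppose for contradiction that $\CL_1, \CL_2, \CL_3$ form a strongly $3$-separated clustering with $\CL_i \subseteq \PS_i$ and $\cardin{\CL_i} \geq 2$ for each $i$. By relabeling, the clusters appear on the real line in the order $\CL_{\sigma(1)} < \CL_{\sigma(2)} < \CL_{\sigma(3)}$ for some permutation $\sigma$ of $\{1,2,3\}$. Each cluster contains at least two same-color points, so $\diamX{\CL_i} \geq L$; setting $D = \max_j \diamX{\CL_j} \geq L$, strong $3$-separation forces pairwise gaps at least $3D \geq 3L$, and thus the combined span is at least $\diamX{\CL_{\sigma(1)}} + 3D + \diamX{\CL_{\sigma(2)}} + 3D + \diamX{\CL_{\sigma(3)}} \geq 9L$. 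Since the span fits in $[0, (n-1)L + 2]$, this bounds $D$ and each $\diamX{\CL_i}$ by roughly $(n-1)L/7$.

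The main step is a case analysis over the six orderings $\sigma \in S_3$. For each $\sigma$, I would locate the extreme color-$\sigma(2)$ points of the middle cluster $\CL_{\sigma(2)}$, derive the two narrow intervals (each of length at most $D$) in which the outer clusters must lie, at distance at least $3D$ from the middle extremes, and show, using the arithmetic-progression positions of $\PS_{\sigma(1)}$ and $\PS_{\sigma(3)}$, that at least one of these narrow outer intervals contains at most one point of its designated color, contradicting $\cardin{\CL_i} \geq 2$. The hard part will be making this count tight for every ordering; if the purely symmetric arithmetic progressions permit heavy triples in some ordering (as seems to happen for moderate $n$), the construction must be refined, for instance by using slightly different step sizes or asymmetric offsets per color, so that every potential ordering is blocked without spoiling the $O(n^2)$ spread bound.
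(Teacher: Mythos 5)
Your construction does not work, and your own hedge at the end is exactly right: symmetric arithmetic progressions \emph{do} admit heavy strongly $3$-separated triples for every sufficiently large $n$. Concretely, take $\PS_i = \{(i-1)+kL : k=0,\ldots,n-1\}$ with $L=\Theta(n)$, and set $\CL_1 = \{0, L\}$, $\CL_2 = \{1+10L,\, 1+11L\}$, $\CL_3 = \{2+20L,\, 2+21L\}$. Each cluster has diameter exactly $L$, and the pairwise gaps are $1+9L$, $1+9L$, and $2+19L$, all at least $3L = 3\max_\ell \diamX{\CL_\ell}$. So this is a strongly $3$-separated colorful clustering of quality $2$, valid whenever $n \geq 22$. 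Your span-counting step only yields $9L \leq (n-1)L+2$, which is vacuous for $n\geq 10$, and the subsequent ``case analysis over $\sigma\in S_3$'' has no content yet; the counterexample shows it cannot be completed for this point set.

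The paper's construction is deliberately \emph{asymmetric} and exploits the ``max'' in the definition of strong separation in a way that a translate-of-the-same-progression construction cannot. It places two of the colors densely inside a single short interval ($\PS_1 = \{1,\ldots,n\}$ and $\PS_2 = \{n+1,\ldots,2n\}$, so $\PS_1\cup\PS_2$ has diameter $< 2n$), and makes the third color sparse with consecutive gap $n$ ($\PS_3 = \{1+(1+i)n : i\in\IRX{n}\}$). If $\cardin{\CL_3}\geq 2$ then $\diamX{\CL_3}\geq n$, and strong $3$-separation forces $\dmY{\CL_1}{\CL_2}\geq 3\diamX{\CL_3}\geq 3n$, which is impossible since $\CL_1$ and $\CL_2$ both live in an interval of length under $2n$. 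The essential idea you are missing is that strong separation lets a \emph{single} fat cluster impose a lower bound on the gap between the \emph{other two} clusters, so you want one sparse color to inflate that gap requirement while two dense colors are trapped in a region too small to accommodate it. A symmetric construction gives all three colors the same geometry and cannot create this tension.
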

\begin{proof}
    Let $\PS_1 = \IRX{n} = \{1,\ldots, n\}$,
    $\PS_2 = n + \IRX{n} = \Set{ n+ x}{x \in\IRX{n}}$, and
    \begin{equation*}
        \PS_3 = \Set{ 1 + (1+i)n}{i \in \IRX{n}}.
    \end{equation*}
    The spread of $\cup_i \PS_i$ is $O(n^2)$, See \figref{3:colors}.

    Consider any colorful strong \emph{$3$-separated} sets
    $\CL_1, \CL_2, \CL_3 $, with $\CL_i \subseteq \PS_i$, for
    $i\in \IRX{3}$.  If $\cardin{\CL_3} \geq 2$, then
    $\diamX{\CL_3} \geq n$. But this implies that $\CL_1$ and $\CL_2$
    can not be strongly $3$-separated, since
    \begin{equation*}
        3n%
        \leq
        3 \diamX{\CL_3}
        \leq%
        \dmY{ \CL_1}{\CL_2}
        \leq%
        \diamX{ \PS_1 \cup \PS_2}%
        \leq 2n,
    \end{equation*}
    which is a contradiction.
\end{proof}

\newcommand{\rankX}[1]{\Mh{\mathcalb{r}}\pth{#1}}%

\subsection{Upper bound on quality by an exponential grid}
\seclab{u:b:exp:grid}

\paragraph{Construction.}
For a point $\pp = (p_1, \ldots, p_d) \in \Re^d$, and a number
$c > 0$, let
\begin{equation*}
    \pp/c = (p_1/c, p_2/c, \ldots, p_d/c).
\end{equation*}
Similarly, for a set $X \subseteq \Re^d$, let
$X/c = \Set{\pp/c}{\pp \in X}$.

Let $n, \sep, \spreadC$ be parameters.  Let $h = \log_2 \spreadC$ --
for the sake of simplicity of exposition assume that $h$ is an
integer.  In the following, we assume that $n$ is sufficiently large
compared to $\sep$ and $h$, and $\spreadC \geq n$.

Let $R_1 = [-3,3]^d \setminus (-2,2)^d$ be a ``ring'', and pick a
maximum number $\ell$, such that the uniform grid $\Grid_1$ with
sidelength $\ell$ contains at least $n$ points in $R_1$. Let $\PS_1$
be a set of arbitrary $n$ points of $\PSB_1 = \Grid_1 \cap R_1$.  In
the following, we assume that $n$ is sufficiently large, such that
$\cardin{\Grid_1 \cap [-3,3]^d} \leq 6 n$.  For $i=2, \ldots, h$, let
$\PS_i = \PS_1 / 3^{i-1}$, $\PSB_i = \PSB_1 / 3^{i-1}$, and
$R_i = R_1 / 3^{i-1}$. Let $\PS = \cup_{i=1}^h \PS_i$. See
\figref{exp:grid}.

\begin{figure}[h!]
    \centerline{%
       \includegraphics{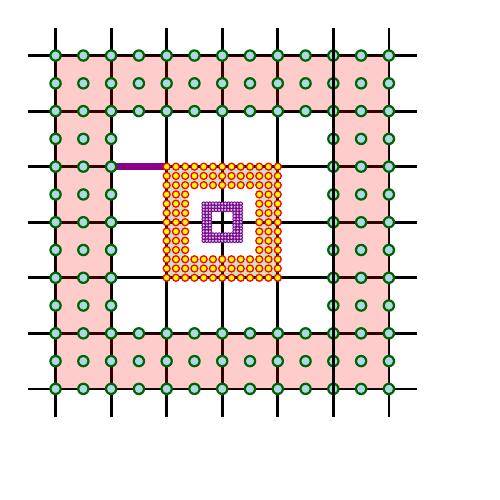}%
    }
    \caption{Exponential grid.}
    \figlab{exp:grid}
\end{figure}%

\paragraph{Analysis.}

We have that the spread of $\PS$ is bounded by
\begin{math}
    O(n 3^h) = \spreadC^{O(1)}.
\end{math}
Also, $\cardin{\PS} = \Theta(n \log \spreadC)$.

\begin{lemma}
    \lemlab{goody}%
    The \emphi{rank} of a cluster $\CL \subseteq \PS$, denoted by
    $\rankX{\CL}$, is the smallest index $j$, such that $\CL$ contains
    points of $\PS_j$. If the rank of $\CL$ is $t$, and $\CL$ contains
    at least $c n/\sep^d$ points of $\PS$, then
    $\diamX{\CL} \geq {2\diamX{R_t}}/{\sep}$, for $c$ a sufficient
    large constant that depends only on $d$.
\end{lemma}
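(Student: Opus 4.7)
\begin{proofof}{\lemref{goody}}
My plan is to argue by induction on $h - t$. Since $\rankX{\CL} = t$, the cluster lies entirely in $\bigcup_{j \geq t} \PS_j$ and contains some witness $p_t \in \PS_t \subseteq R_t$, which in particular satisfies $\left\| p_t \right\|_\infty \geq 2/3^{t-1}$. Write $D_t = \diamX{R_t} = \sqrt{d}\cdot 6/3^{t-1}$, and partition $\CL = \CL_t \sqcup \CL'$ with $\CL_t = \CL \cap \PS_t$ and $\CL' = \CL \cap \bigcup_{j > t}\PS_j$. Pigeonhole gives that one of $\CL_t, \CL'$ contains at least $cn/(2\sep^d)$ points, splitting the analysis into two cases.

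In the first case, $\cardin{\CL_t} \geq cn/(2\sep^d)$, and $\CL_t$ consists of grid points of $\Grid_1/3^{t-1}$ with spacing $\ell/3^{t-1}$, where $\ell = \Theta_d(n^{-1/d})$ by the choice of $\Grid_1$ (chosen to pack at least $n$ points into $R_1$). A grid-packing estimate in the spirit of the proof of \lemref{l:b:spread:u} forces
\begin{equation*}
   \diamX{\CL} \;\geq\; \diamX{\CL_t} \;\geq\; \Omega_d\bigl( \cardin{\CL_t}^{1/d} \cdot \ell/3^{t-1} \bigr) \;=\; \Omega_d\bigl(D_t/\sep\bigr),
\end{equation*}
and choosing $c = c(d)$ sufficiently large makes the right-hand side at least $2D_t/\sep$, finishing this case.

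In the second case, $\cardin{\CL'} \geq cn/(2\sep^d)$, and $\bigcup_{j > t}\PS_j$ is (up to the scaling $\pp \mapsto \pp/3$) the same construction with $h - t$ levels. The inductive hypothesis applied to $\CL'$ yields $\diamX{\CL'} \geq 2D_{t'}/\sep$ where $t' = \rankX{\CL'} \geq t+1$, so $D_{t'} \leq D_t/3$ and this by itself falls short of the target by at most a factor of $3$. I close that gap using $p_t$: every $q \in \CL'$ satisfies $\left\| q \right\|_\infty \leq 3/3^{t'-1} \leq 1/3^{t-1}$, so
\begin{equation*}
   \diamX{\CL} \;\geq\; \left\| p_t - q \right\|_\infty \;\geq\; \tfrac{2}{3^{t-1}} - \tfrac{1}{3^{t-1}} \;=\; \tfrac{1}{3^{t-1}},
\end{equation*}
which already exceeds $2D_t/\sep = 12\sqrt{d}/(\sep \cdot 3^{t-1})$ as soon as $\sep \geq 12\sqrt{d}$.

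The main obstacle is marshalling all the dimension-dependent constants into a single $c(d)$ and handling the small-$\sep$ regime cleanly. When $\sep$ is small relative to $\sqrt{d}$, the target $2D_t/\sep$ is only a modest constant times $D_t$, while $cn/\sep^d$ is a near-maximal fraction of a single level's $n$ grid points; the second case must then bottom out in the first case after only a bounded number of descents, and one verifies that the grid-packing bound at the terminating level, combined with the intermediate chain of factor-$3$ scalings, still delivers $2D_t/\sep$. Checking this bookkeeping, along with the base case $t = h$ where $\CL \subseteq \PS_h$ reduces immediately to the grid estimate, is the delicate part.
\end{proofof}
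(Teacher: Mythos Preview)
Your argument lands in the right place but takes a longer road than necessary. The induction on $h-t$ is superfluous: in your second case you invoke the inductive hypothesis on $\CL'$, observe that the resulting bound $2D_{t'}/\sep$ falls short by a factor of~$3$, and then abandon it entirely in favor of the direct $\ell_\infty$ estimate between $p_t \in R_t$ and any $q \in \CL'$. That last estimate is all you need, and it works as soon as $\CL'$ is \emph{nonempty}---the cardinality of $\CL'$ plays no role. (As written, the induction also has a small mismatch: the hypothesis asks for $\geq cn/\sep^d$ points, but pigeonhole only hands you $\geq cn/(2\sep^d)$.)

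The paper's proof exploits exactly this observation and splits on whether $\CL$ meets $\bigcup_{\tau>t}\PS_\tau$ at all, rather than on cardinalities. If it does, the ring geometry gives $\diamX{\CL} \geq \diamX{R_t}/(6\sqrt d)$ immediately from one point in $R_t$ and one point inside $[-1,1]^d/3^{t-1}$, and the standing assumption $\sep > 12d$ (stated in \thmref{exp:2}) finishes. If it does not, then $\CL \subseteq \PS_t$ in its entirety, and the grid-packing estimate applies to the full $cn/\sep^d$ points with no factor-of-two loss. No induction, no base case, no descent bookkeeping.

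Consequently your last paragraph, which wrestles with the small-$\sep$ regime, is unnecessary: the construction in this section is only asserted for $\sep > 12d$, and under that hypothesis your second-case inequality $1/3^{t-1} \geq 12\sqrt d/(\sep\cdot 3^{t-1})$ holds outright.
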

\begin{proof}
    If $\CL$ contains points of $\cup_{\tau > t} \PS_\tau$, then by
    the ring property we have that
    \begin{equation*}
        \diamX{\CL}%
        \geq %
        \frac{\diamX{R_t}}{ 6 \sqrt{d}}%
        \geq %
        \frac{2\diamX{R_t}}{\sep}
    \end{equation*}
    (see purple segment in \figref{exp:grid}), which establishes the
    claim.

    Otherwise, $\CL \subseteq \PS_t$. By scaling, we can assume that
    $t=1$.  Let $\PSB = \Grid_1 \cap [-3,3]^d$, we have by
    construction (and the assumption that $n$ is sufficiently large),
    that $\cardin{ \PSB } \leq 6 \cardin{\PS_1} \leq 6n$.  A point set
    $\PSC \subseteq \PSB$ that contains at least
    $\gamma \cardin{\PSB} > 1$ points of $\PSB$, for some
    $\gamma \in (0,1]$, must have diameter
    $\rho \geq \gamma^{1/d} \diamX{\PSB} /(4d)$ (this follows from the
    same argument used in \lemref{l:b:spread:u}).  In particular, for
    $c$ sufficiently large, if $\CL$ contains at least $c n/\sep^d$
    points of $\PSB$, then
    \begin{equation*}
        \diamX{\CL} %
        \geq %
        \pth{\frac{c}{6\sep^d}}^{1/d} \cdot \frac{1}{4d} \cdot
        \diamX{\PSB}
        \geq %
        \frac{2}{\sep}
        \diamX{R_t},
    \end{equation*}
    for $c$ sufficiently large.
\end{proof}

\begin{lemma}
    For any two clusters $\CL_1, \CL_2 \subseteq \PS$, such that
    \begin{math}
        \dmY{\CL_1}{\CL_2}%
        \geq%
        \sep \cdot \max\bigl(\diamX{\CL_1},
        \diamX{\CL_2} \bigr),
    \end{math}
    we have that
    \begin{math}
        \displaystyle%
        \min \bigl( \cardin{\CL_1}, \cardin{\CL_2} \bigr)%
        =%
        O( n / \sep^d ).
    \end{math}
\end{lemma}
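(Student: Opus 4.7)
The plan is to reach a contradiction by combining the diameter lower bound from \lemref{goody} with an upper bound on the diameter of any subset of $\PS$ of a given rank.

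First, I would suppose for contradiction that $\min(\cardin{\CL_1}, \cardin{\CL_2}) \geq cn/\sep^d$ for the constant $c$ from \lemref{goody}. Let $t_i = \rankX{\CL_i}$, and without loss of generality assume $t_1 \leq t_2$. Since $\cardin{\CL_1} \geq cn/\sep^d$, \lemref{goody} gives $\diamX{\CL_1} \geq 2\diamX{R_{t_1}}/\sep$. Then the well-$\sep$-separation assumption yields
\begin{equation*}
    \dmY{\CL_1}{\CL_2}
    \geq
    \sep \cdot \max\bigl(\diamX{\CL_1}, \diamX{\CL_2}\bigr)
    \geq
    \sep \cdot \diamX{\CL_1}
    \geq
    2\diamX{R_{t_1}}.
\end{equation*}

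Next, I would establish the matching upper bound. Since $t_1 \leq t_2$, both clusters are contained in $\bigcup_{i \geq t_1} \PS_i$. By construction $\PS_i = \PS_1/3^{i-1}$ and $R_i = R_1/3^{i-1}$ with $R_1 \subseteq [-3,3]^d$, so every point of $\PS_i$ with $i \geq t_1$ lies in the cube $[-3,3]^d/3^{t_1-1}$, whose diameter equals $\diamX{R_{t_1}}$. Therefore $\dmY{\CL_1}{\CL_2} \leq \diamX{R_{t_1}}$, which contradicts the lower bound above.

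The main obstacle, which is really conceptual rather than technical, is being precise about the rank bookkeeping: one has to note that even though $\CL_2$ may have higher rank than $\CL_1$ and thus live inside a much smaller ring, the pairwise distance $\dmY{\CL_1}{\CL_2}$ is still bounded by the diameter of the enclosing cube at scale $t_1$, which is exactly the scale at which the lower bound from \lemref{goody} forces $\diamX{\CL_1}$ to be large. Since separation scales with the \emph{larger} of the two cluster diameters (well-separation), taking the cluster of smaller rank suffices — this is precisely where well-separation (rather than semi-separation) is being used.
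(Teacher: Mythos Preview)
Your proposal is correct and follows essentially the same argument as the paper: assume both clusters have at least $cn/\sep^d$ points, apply \lemref{goody} to the cluster of smaller rank to get a diameter lower bound, use the separation hypothesis to obtain $\dmY{\CL_1}{\CL_2} \geq 2\diamX{R_{t_1}}$, and then observe that both clusters lie in the cube $H_{t_1} = [-3,3]^d/3^{t_1-1}$ (whose diameter equals $\diamX{R_{t_1}}$), yielding a contradiction. Your write-up is in fact slightly cleaner than the paper's, which has minor notational slips (writing $\CL$ and $t$ where $\CL_1$ and $r$ are intended).
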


\begin{proof}
    Assume for contradiction that $\cardin{\CL_i} \geq c n /\sep^d$,
    for $i=1,2$, where $c$ is the constant specified in
    \lemref{goody}. If $r = \rankX{ \CL_1 } \leq \rankX{\CL_2}$ then
    $\CL_1, \CL_2 \subseteq H_r = [-3.3]^d/3^{r-1}$. By
    \lemref{goody}, we have that
    $\diamX{\CL} \geq {2\diamX{R_t}}/\sep = {2\diamX{H_r}}/\sep $. But
    this implies that
    \begin{equation*}
        \dmY{\CL_1}{\CL_2}%
        \geq%
        \sep \cdot \max\bigl(\diamX{\CL_1},
        \diamX{\CL_2} \bigr)
        \geq%
        2 \diamX{H_r},
    \end{equation*}
    which is a contradiction, as the two sets are contained in the
    hypercube $H_r$.

    The case that $\rankX{ \CL_1 } \geq \rankX{\CL_2}$ is handled in a
    similar fashion.
\end{proof}

This implies the following result.

\begin{theorem}
    \thmlab{exp:2}%
    Given parameter $\sep > 12d$, an integer $n$ sufficiently large,
    and a parameter $\spreadC \geq n$, one can construct a point set
    $\PS$ in $\Re^d$ of size $N = O( n \log \spreadC)$, such that
    $\spreadX{\PS} = \spreadC^{O(1)}$. Furthermore, for any two
    clusters $\CL_1, \CL_2 \subseteq \PS$ we have that if they are
    (strongly or well) $\sep$-separated then they are
    ``small''. Formally, we have
    \begin{equation*}
        \dmY{\CL_1}{\CL_2}%
        \geq%
        \sep \cdot \max\bigl(\diamX{\CL_1},
        \diamX{\CL_2} \bigr)
        \quad\implies\quad%
        \displaystyle%
        \min \bigl( \cardin{\CL_1}, \cardin{\CL_2} \bigr)%
        =%
        O\pth{ \frac{N}{ \sep^d \log \spreadC} }. %
    \end{equation*}
\end{theorem}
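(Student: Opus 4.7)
The plan is to combine the exponential grid construction introduced in \secref{u:b:exp:grid} with the previous lemma; almost all the work has already been done, so the proof is essentially a bookkeeping argument on sizes and spread. First I would take $\PS$ to be the point set $\cup_{i=1}^{h} \PS_i$ defined above, with $h = \log_2 \spreadC$ and with $\PS_1$ being the $n$ chosen grid points in the ring $R_1 = [-3,3]^d \setminus (-2,2)^d$. Since each $\PS_i$ is an isometric copy of $\PS_1$ scaled by $1/3^{i-1}$, we immediately get $N = \cardin{\PS} = n h = \Theta(n \log \spreadC)$, and this fixes the relationship between $n$ and $N$ that we will invert at the end.

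Next I would verify the spread bound. The outer diameter is $\diamX{R_1} = O(1)$ and the closest pair distance in $\PS$ is at least the sidelength $\ell$ of the innermost grid (scaled by $1/3^{h-1}$), which behaves like $\Theta(n^{-1/d} \cdot 3^{-h})$. Hence
\[
\spreadX{\PS} = O\pth{n^{1/d} \cdot 3^h} = \spreadC^{O(1)},
\]
using that $3^h = \spreadC^{O(1)}$ and $n \leq \spreadC$.

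For the clustering bound, I would simply apply the previous lemma: any two clusters $\CL_1, \CL_2 \subseteq \PS$ satisfying $\dmY{\CL_1}{\CL_2} \geq \sep \max(\diamX{\CL_1}, \diamX{\CL_2})$ have $\min(\cardin{\CL_1}, \cardin{\CL_2}) = O(n/\sep^d)$. The assumption $\sep > 12d$ is what makes the factor $1/(6\sqrt{d}) \geq 2/\sep$ used in \lemref{goody} valid, so the hypotheses line up. Substituting $n = \Theta(N/\log \spreadC)$ gives the stated bound
\[
\min\bigl(\cardin{\CL_1}, \cardin{\CL_2}\bigr) = O\pth{\frac{N}{\sep^d \log \spreadC}}.
\]

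There is essentially no hard step; the only thing that requires a little care is making sure the scaling parameters behave as claimed (in particular, that $3^h = \spreadC^{O(1)}$ and that the condition $\spreadC \geq n$ ensures the closest-pair distance computation does not degrade the spread beyond $\spreadC^{O(1)}$). If anything is subtle, it is confirming the $\sep > 12d$ threshold is exactly what is needed by \lemref{goody}, which I would check by tracing through the $1/(6\sqrt{d})$ factor arising from the ring width estimate.
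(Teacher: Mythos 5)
Your proposal is correct and takes exactly the same route as the paper, which derives \thmref{exp:2} directly from the exponential-grid construction and the unnamed lemma immediately preceding it, together with the substitution $n = \Theta(N/\log\spreadC)$. Your spread estimate $O(n^{1/d} 3^h)$ is in fact slightly tighter than the paper's stated $O(n\,3^h)$, but both are $\spreadC^{O(1)}$, and your tracing of the $\sep > 12d$ threshold back to the $1/(6\sqrt{d}) \ge 2/\sep$ step in \lemref{goody} is the right sanity check.
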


The above extends to $\kk$ clusters.
\begin{corollary}
    \corlab{exp:k}%
    Given parameter $\sep > 12d$, an integer $n$ sufficiently large, a
    parameter $\spreadC \geq n$, and an integer $\kk > 2$, one can
    construct a point set $\PS$ in $\Re^d$ of size
    $N = O( n \kk \log \spreadC)$, such that
    $\spreadX{\PS} = \spreadC^{O(1)}$. Furthermore, for any $\kk$
    clusters $\CL_1, \ldots, \CL_\kk \subseteq \PS$, if they are
    strongly or well $\sep$-separated, see \defref{separation}, then
    \begin{math}
        \displaystyle%
        \min_i  \cardin{\CL_i}%
        =%
        O\pth{ \frac{N}{ \kk \sep^d \log \spreadC} }.
    \end{math}
\end{corollary}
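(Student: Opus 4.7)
The plan is to adapt the exponential-grid construction of \thmref{exp:2} by using $n\kk$ points per ring in place of $n$: concretely, pick the maximum $\ell$ such that the uniform grid $\Grid_1$ of sidelength $\ell$ places at least $n\kk$ points in the outer ring $R_1$, keep $n\kk$ of them as $\PS_1$, and scale as before to obtain $\PS_2, \ldots, \PS_h$. This yields $\PS = \bigcup_i \PS_i$ with $N = \Theta(n\kk\log\spreadC)$ points and $\spreadX{\PS} = \spreadC^{O(1)}$, matching the statement. Since strong $\sep$-separation implies well $\sep$-separation, it suffices to treat the well-separated case.

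Suppose for contradiction that $\CL_1, \ldots, \CL_\kk$ are well $\sep$-separated with $\cardin{\CL_i} \geq \alpha n/\sep^d$ for every $i$, where $\alpha = \alpha(d)$ is a large constant to be fixed below. First I would prove a $\kk$-aware strengthening of \lemref{goody}: if $\CL \subseteq \PS$ has rank $t$ and $\cardin{\CL} \geq \alpha n/\sep^d$, then $\diamX{\CL} \geq \beta\, \diamX{R_t}/(\sep\kk^{1/d})$ for some $\beta = \beta(d,\alpha)$ that is large when $\alpha$ is large. If $\CL$ touches a ring of rank strictly greater than $t$, the original \lemref{goody} already gives $\diamX{\CL} \geq \diamX{R_t}/(6\sqrt{d})$, which dominates the target. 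Otherwise $\CL \subseteq \PS_t$; scaling reduces to $t=1$, and since $\cardin{\PSB_1} \leq 6n\kk$ the fraction of $\PSB_1$ captured by $\CL$ is at least $\alpha/(6\sep^d\kk)$, so the grid-diameter estimate from inside the proof of \lemref{goody} yields $\diamX{\CL} \geq (\alpha/(6\sep^d\kk))^{1/d}\,\diamX{\PSB_1}/(4d)$, matching the target after collecting constants.

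Let $r^* = \min_i \rankX{\CL_i}$, so every $\CL_i$ lies inside the hypercube $H_{r^*} = [-3,3]^d/3^{r^*-1}$, and the strengthened \lemref{goody} gives $\diamX{\CL_i} \geq \beta\, \diamX{R_{r^*}}/(\sep\kk^{1/d})$ for each $i$. Picking representatives $p_i \in \CL_i$, well $\sep$-separation between each pair yields $\normX{p_i-p_j} \geq \sep \max(\diamX{\CL_i},\diamX{\CL_j}) \geq \beta\, \diamX{R_{r^*}}/\kk^{1/d}$. Thus the $\kk$ points $p_1, \ldots, p_\kk$ all lie in $H_{r^*}$ (whose diameter is $\Theta_d(\diamX{R_{r^*}})$) and are pairwise at distance at least $\beta\,\diamX{R_{r^*}}/\kk^{1/d}$; a standard volume-packing estimate (disjoint balls of radius $\beta\,\diamX{R_{r^*}}/(2\kk^{1/d})$ around the $p_i$, inside a region of diameter $\Theta_d(\diamX{R_{r^*}})$) caps the number of such points by $O_d(\kk/\beta^d)$, which is strictly less than $\kk$ once $\beta$ exceeds a suitable dimension-only constant $C_d$.

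The main obstacle is the calibration of the two constants: one must pick $\alpha$ (and hence $\beta$) large enough, depending only on $d$, to defeat the packing constant $C_d$, while keeping every dependence on $\kk$ and $\sep$ absorbed into the $\kk^{1/d}$ and $\sep$ factors already appearing. Once this calibration is fixed, the assumption $\cardin{\CL_i} \geq \alpha n/\sep^d$ for all $i$ is contradicted, and the desired bound $\min_i \cardin{\CL_i} = O(n/\sep^d) = O\bigl(N/(\kk\sep^d\log\spreadC)\bigr)$ follows.
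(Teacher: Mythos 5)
Your proposal has a genuine gap, and the underlying construction does not actually yield the claimed bound.

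The flaw is in the packing step. Your strengthened version of \lemref{goody} gives, for a cluster $\CL_i$ of rank $t_i$, the bound $\diamX{\CL_i} \geq \beta\,\diamX{R_{t_i}}/(\sep\kk^{1/d})$ --- the diameter is calibrated against the cluster's \emph{own} ring $R_{t_i}$. You then set $r^* = \min_i t_i$ and assert that \emph{every} cluster satisfies $\diamX{\CL_i} \geq \beta\,\diamX{R_{r^*}}/(\sep\kk^{1/d})$, which feeds the uniform pairwise distance $\geq \beta\,\diamX{R_{r^*}}/\kk^{1/d}$ into the packing estimate. But for a cluster of rank $t_i \gg r^*$ one has $\diamX{R_{t_i}} = \diamX{R_{r^*}}/3^{t_i - r^*}$, which can be arbitrarily small; the lemma only guarantees a correspondingly tiny diameter, and well-separation then only forces a correspondingly tiny gap from the \emph{other} small-scale clusters. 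Only the distances \emph{to} the rank-$r^*$ cluster enjoy the large lower bound. So the $\kk$ representatives are not pairwise $\beta\,\diamX{R_{r^*}}/\kk^{1/d}$-separated, and the volume bound $O_d(\kk/\beta^d)$ does not apply.

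Worse, the construction itself cannot be salvaged: a single exponential grid with $n\kk$ points per ring genuinely admits $\kk$ well $\sep$-separated clusters each of size $\Theta(\kk n/\sep^d)$, violating the required $O(n/\sep^d)$ bound. Concretely, take $\spreadC \geq 2^\kk$ so there are $\geq \kk$ rings, and place one cluster per ring, each near the outer corner of its ring, of size roughly $\kk n/(3^d \sep^d)$. Because your grid spacing is a factor $\kk^{1/d}$ finer than in \thmref{exp:2}, such a cluster has diameter only about a $1/\sep$-fraction of its ring's diameter, while the gap to the next (inner) ring is a constant fraction of the outer ring's diameter; this makes the clusters well $\sep$-separated while each carries $\Theta(\kk)$ times more points than $N/(\kk\sep^d\log\spreadC)$. (They are \emph{not} strongly separated, but the corollary must hold for the well-separated case too.) The enemy is exactly the ``one cluster per scale'' configuration, which dodges any within-scale packing constraint.

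The paper sidesteps both issues entirely. It does not densify the grid; instead it takes $\floor{\kk/2}$ disjoint, $\Delta$-spaced translates of the \thmref{exp:2} point set (each with only $n$ points per ring), for a total of $O(n\kk\log\spreadC)$ points. With $\kk$ clusters and only $\floor{\kk/2}$ copies, pigeonhole forces two clusters into one copy, and then \thmref{exp:2} applied to that copy bounds the smaller of the two by $O(n/\sep^d)$. This keeps the per-ring density identical to the two-cluster case, so no rebalancing of constants against $\kk$ is ever needed, and the awkward multi-rank packing analysis never arises.
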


\begin{proof:in:appendix}{\corref{exp:k}}
    Let $\PS$ be the point set of \thmref{exp:2}, and let
    $\Delta= \diamX{\PS}$.  We take $\floor{\kk/2}$ copies of $\PS$,
    spacing them $\Delta$ distance from each other along a line. If
    there $\kk$ clusters, then at least two of them must belong to the
    same copy of the point set (if a cluster spans more than a single
    copy of the point set, then it is the only cluster involved in its
    point set because of the separation property). But then the bound
    of \thmref{exp:2} applies the claim immediately.
\end{proof:in:appendix}

\subsection{Upper bound in high dimensions}

\begin{theorem}[Johnson-Lindenstrauss lemma, \cite{jl-elmih-84}]
    For any $\eps \in (0,1)$, and a set of $n$ points
    $\PS \subseteq \Re^n$, there exists a linear function
    $f : \Re^n \rightarrow \Re^{d}$, for
    $d = 8\ceil{ \eps^{-2} \ln n }$, such that
    \begin{equation*}
        \forall \pa, \pb \in \PS
        \qquad
        (1 - \eps)\normX{\pa - \pb}%
        \leq%
        \normX{ f(\pa) - f(\pb)}%
        \leq%
        (1 + \eps)\normX{\pa - \pb}.
    \end{equation*}
\end{theorem}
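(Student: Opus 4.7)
The plan is to prove the lemma by the probabilistic method, taking $f$ to be a random Gaussian projection. Specifically, set $f(x) = \frac{1}{\sqrt{d}}\,A x$, where $A \in \Re^{d \times n}$ is a random matrix whose entries are drawn independently from the standard normal distribution $\mathcal{N}(0,1)$. I would show that a random $f$ of this form satisfies the stated distortion with strictly positive probability, which establishes existence.

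The key distributional observation is that for any fixed nonzero $v \in \Re^n$, rotational invariance of the Gaussian implies each coordinate $(Av)_i$ is distributed as $\mathcal{N}(0,\normX{v}^2)$ and the coordinates are independent. Consequently $d\, \normX{f(v)}^2/\normX{v}^2$ is chi-squared with $d$ degrees of freedom, so $\Ex{\normX{f(v)}^2} = \normX{v}^2$. Applying a sharp chi-squared tail bound (for example, the Dasgupta--Gupta moment generating function calculation, or equivalently the Laurent--Massart inequality) would then yield
\begin{equation*}
    \Prob{ \bigl|\normX{f(v)}^2 - \normX{v}^2\bigr| > \eps\, \normX{v}^2 }
    \leq
    2\exp\pth{-d\pth{\eps^2/2 - \eps^3/3}/2}
\end{equation*}
for $\eps \in (0,1)$. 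For $d = 8\ceil{\eps^{-2}\ln n}$, the exponent is at most $-2\ln n$ (after absorbing the lower-order $\eps^3$ term into the slack of the ceiling and the constant $8$), so the per-vector failure probability is at most $2/n^2$.

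Finally, I would apply this concentration bound to each of the $\binom{n}{2}$ pairwise difference vectors $v = \pa - \pb$ for $\pa,\pb \in \PS$ with $\pa \neq \pb$, and take a union bound: the probability that any pair has its squared norm distorted by more than a factor $1 \pm \eps$ is at most $2\binom{n}{2}/n^2 < 1$, so some realization of $A$ is simultaneously good for all pairs. Taking square roots converts the multiplicative bound on squared norms into the stated $(1 \pm \eps)$ distortion on norms (using $\sqrt{1 \pm \eps} \in [1-\eps, 1+\eps]$ for $\eps \in (0,1)$, possibly after a harmless rescaling of $\eps$ absorbed into constants).

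The main obstacle is pinning down the precise constant $8$ in the dimension bound. A generic sub-exponential Chernoff estimate would yield a larger constant; the chi-squared specific moment generating function computation, where the exponent $\eps^2/2 - \eps^3/3$ arises naturally from $\mathbb{E}[\exp(t Z)]$ for $Z \sim \chi^2_1$, is what produces the stated constant. Everything else is a routine combination of Gaussian rotation invariance with a union bound.
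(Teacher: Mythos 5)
The paper does not prove this statement: the Johnson--Lindenstrauss lemma is cited as background from \cite{jl-elmih-84} and used as a black box (in \lemref{l:b:high:dim}), so there is no in-paper proof to compare against. Your sketch is the standard Dasgupta--Gupta argument (random Gaussian projection, $\chi^2$ concentration, union bound over pairs), which is the right approach and is correct at a high level.

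There is, however, a genuine quantitative gap in the step where you dismiss the $\eps^3$ term. With $d = 8\eps^{-2}\ln n$ and the bound
$\Prob{\,\bigl|\normX{f(v)}^2 - \normX{v}^2\bigr| > \eps\normX{v}^2} \leq 2\exp\bigl(-\tfrac{d}{2}(\eps^2/2 - \eps^3/3)\bigr)$,
the exponent evaluates to $-(2 - 4\eps/3)\ln n$, so the deficit relative to $-2\ln n$ is $\Theta(\eps\ln n)$, not $O(1)$; the ceiling contributes at most an $O(1)$ additive bonus to $d$, which cannot absorb a term that grows with $\ln n$. For $\eps$ near $1$ the union bound then fails. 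The standard fix is to avoid routing through the symmetric $\pm\eps$ interval for squared norms. Instead, bound the two tails against the interval $[(1-\eps)^2, (1+\eps)^2]\normX{v}^2$ directly: plugging $\beta=(1\pm\eps)^2$ into the chi-squared inequality $\Prob{L \gtrless \beta d} \leq \exp\bigl(\tfrac{d}{2}(1-\beta+\ln\beta)\bigr)$ and expanding $\ln(1\pm\eps)$ gives per-tail exponents of $-d\eps^2(1-\eps/3)$ and $-d\eps^2$ respectively. With $d = 8\eps^{-2}\ln n$ both are at most $-\tfrac{16}{3}\ln n$, so the per-pair failure probability is $O(n^{-5})$ and the union bound over $\binom{n}{2}$ pairs closes comfortably. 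In short: your plan works, but the phrase ``absorbing the lower-order $\eps^3$ term into the slack of the ceiling and the constant $8$'' is not a valid step as written; the slack you actually need comes from squaring $(1\pm\eps)$, not from the ceiling.
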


\begin{lemma}
    \lemlab{l:b:high:dim}%
    For any $d$ sufficiently large, there exists a point set
    $\PS \subseteq \Re^d$, of size $\exp\bigl( \Omega( d) \bigr)$,
    such that all $2$-separated pairs are useless.

    In particular, there is a set of $n$ points in $\Re^{O( \log n)}$,
    such that any two clusters that are $2$-separated are useless.
\end{lemma}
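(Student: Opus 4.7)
The plan is to instantiate the point set as a Johnson--Lindenstrauss projection of the standard basis vectors in $\Re^n$. Since the standard basis $e_1,\ldots,e_n$ in $\Re^n$ is an equidistant set (all pairwise distances equal $\sqrt{2}$), a low-distortion projection yields a point set in which every pairwise distance lies in a narrow band around $\sqrt{2}$, and this narrow band is what prevents any $2$-separated pair from having both clusters of size at least two.

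Concretely, I would fix a constant $\eps$ strictly less than $1/3$, say $\eps = 1/4$, and apply the Johnson--Lindenstrauss lemma to $\{e_1,\ldots,e_n\} \subseteq \Re^n$ to obtain a map $f$ into $\Re^{d}$ with $d = 8\lceil \eps^{-2} \ln n \rceil = O(\log n)$. Let $\PS = \{f(e_1),\ldots,f(e_n)\}$. By the JL guarantee, every pairwise distance in $\PS$ lies in the interval $\bigl[(1-\eps)\sqrt{2},\,(1+\eps)\sqrt{2}\bigr]$.

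Next I would argue that any $2$-separated pair of clusters in $\PS$ must be useless. Suppose $\CL_1,\CL_2 \subseteq \PS$ are $2$-separated with $|\CL_1| \geq 2$ and $|\CL_2| \geq 2$. Then each $\CL_i$ contains two distinct points, so
\begin{equation*}
    \diamX{\CL_i} \geq (1-\eps)\sqrt{2},
    \qquad i=1,2,
\end{equation*}
while
\begin{equation*}
    \dmY{\CL_1}{\CL_2} \leq (1+\eps)\sqrt{2}.
\end{equation*}
For $\kk=2$, strong and well separation coincide, so $2$-separation demands
\begin{equation*}
    \dmY{\CL_1}{\CL_2} \geq 2 \cdot \max\bigl(\diamX{\CL_1}, \diamX{\CL_2}\bigr) \geq 2(1-\eps)\sqrt{2}.
\end{equation*}
Combining the two bounds forces $1+\eps \geq 2(1-\eps)$, i.e., $\eps \geq 1/3$, contradicting $\eps = 1/4$. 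So at least one of $\CL_1,\CL_2$ has a single point, which is precisely uselessness. This proves the second ``In particular'' statement of the lemma, giving $n$ points in $\Re^{O(\log n)}$ with the desired property.

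For the first statement, I would simply invert the relation between $n$ and $d$. Given $d$ sufficiently large, choose $n$ with $\log n = \Theta(d)$ (so that the target dimension of the JL embedding with $\eps = 1/4$ is at most $d$), apply the above construction, and pad the embedded points with zero coordinates if needed to lie in $\Re^d$. This yields a set of $\exp(\Omega(d))$ points in $\Re^d$ with the same uselessness property. The main (minor) obstacle is just checking that the constants in the JL bound are compatible with a gap strictly less than $1/3$; that is trivially handled by choosing $\eps$ to be any fixed constant in $(0,1/3)$.
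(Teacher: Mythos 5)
Your proof takes essentially the same approach as the paper's: apply Johnson--Lindenstrauss to the equidistant set $\{e_1,\ldots,e_n\}$ (the paper scales by $1/\sqrt{2}$, which is immaterial), observe that every pairwise distance in the image lands in a narrow band, and conclude that once both clusters contain at least two points the $2$-separation requirement is arithmetically impossible. In fact your threshold $\eps<1/3$ (with $\eps=1/4$) is the correct one, whereas the paper's displayed inequality $(1-\eps)(1+2\eps)>1+\eps$ is false for every $\eps>0$ and its choice $\eps=1/2$ would only rule out $c$-separation for $c>3$, so your version patches a minor arithmetic slip in the paper without changing the idea.
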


\begin{proof:in:appendix}{\lemref{l:b:high:dim}}
    Let $e_1, \ldots, e_n$ be the standard orthonormal basis for
    $\Re^n$, and let $\PS = \Set{e_i/\sqrt{2}}{i=1,\ldots, n}$. By the
    Johnson-Lindenstrauss lemma, for all $0 < \eps$, there exists an
    $(1\pm \eps)$-embedding $\PS$ into $\Re^d$, where
    $d = 8 \ceil{\eps^{-2} \ln n } $.  Let $\PS' \subseteq \Re^d$ be
    this embedding of $\PS$.  Any subset $\PSB' \subseteq \PS'$, with
    two or more points, has diameter in the range
    $I = [1-\eps,1 +\eps]$. In particular, we have that for two such
    subsets $\PSB_1', \PSB_2' \subseteq \PS'$, we have that
    \begin{math}
        \dmY{\PSB_1'}{\PSB_2'}, \diamX{\PSB_1'}, \diamX{\PSB_2'} \in
        I,
    \end{math}
    Since $(1-\eps) (1+2\eps) > 1+\eps$, we conclude that for $\PS'$,
    for any $(1+2\eps)$-separated sets $\PSB_1', \PSB_2$, it must be
    that $\min( \cardin{\PSB_1'}, \cardin{\PSB_2'} ) = 1$. Namely, all
    $2$-separated pairs are useless.

    Setting $\eps = 1/2$, and stating $n$ in terms of $d$, we have
    \begin{equation*}
        \exp(d ) = \exp \pth{ 8 \ceil{\eps^{-2} \ln n }}
        \leq  e^8 n^{8/\eps^2}
        \implies%
        \exp(d/32 - 4 ) = \exp \pth{ 8 \ceil{\eps^{-2} \ln n }}
        \leq   n.
    \end{equation*}
\end{proof:in:appendix}

\begin{remark:unnumbered}
    \lemref{l:b:spread} and \lemref{l:b:high:dim} imply that any
    useful bound on the quality of well-separated pairs requires
    the bound to depend exponentially on the dimension, and
    logarithmically on the spread of the point set.
\end{remark:unnumbered}

\section{Algorithms for computing heavy and %
   separated $\kk$ clusters}
\seclab{algorithms}

\subsection{Semi-separated clusters}

\begin{lemma}
    \lemlab{h:k:sep}%
    Given a set $\PS$ of $n$ points in $\Re^d$, and parameters $\kk$
    and $\sep$, one can compute, in $O(n \kk )$ expected time, a semi
    $\sep$-separated collection of $\kk$ sets
    $\CL_1, \ldots, \CL_\kk \subseteq \PS$, such that
    $\min_i \cardin{\CL_i} = \Omega\bigl(n / (\kk \sep^d ) \bigr)$.
\end{lemma}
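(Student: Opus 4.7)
The plan is to greedily extract $\kk$ clusters using the tight ball extraction tool of \thmref{main:m:disk}, each time deleting a ``safety shell'' around the just-extracted ball in order to enforce semi-separation with respect to all future clusters. Set $\alpha = \ceil{n/(c \kk \sep^d)}$, where $c = c(d)$ is a constant to be chosen sufficiently large. Maintain a ``live'' subset (initialized to $\PS$). For $i = 1, \ldots, \kk$, apply \thmref{main:m:disk} to the live set with parameter $\alpha$ to obtain a ball $\ball_i$ of radius $r_i$ satisfying $r_i \leq 2\ropt(\cdot,\alpha)$ and containing at least $\alpha$ live points; let $\CL_i$ be any $\alpha$ of these points; then delete from the live set all points within distance $(1+2\sep)r_i$ of the center of $\ball_i$, i.e., $\ball_i$ itself together with a safety shell of width $2\sep r_i$ around it.

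Semi-separation follows immediately. Since $\diamX{\CL_i} \leq 2r_i$, and every $\CL_j$ with $j > i$ is contained in the live set produced after iteration $i$ and hence avoids the shell, we get
\begin{equation*}
    \dmY{\CL_i}{\CL_j}
    \;\geq\;
    2\sep r_i
    \;\geq\;
    \sep\, \diamX{\CL_i}
    \;\geq\;
    \sep\, \min\bigl(\diamX{\CL_i},\diamX{\CL_j}\bigr),
\end{equation*}
as required.

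The main obstacle, and the only quantitative step, is showing that at least $\alpha$ live points always remain so that the $i$\th extraction succeeds. The key observation is that $\ropt(\cdot,\alpha) \geq r_i/2$, so any ball of radius strictly less than $r_i/2$ contains fewer than $\alpha$ live points. A standard packing argument covers the enlarged ball of radius $(1+2\sep)r_i$ by $O(\sep^d)$ balls of radius $r_i/3$ (the hidden constant depending only on $d$), each of which contains fewer than $\alpha$ live points. Hence iteration $i$ removes at most $O(\sep^d)\cdot\alpha$ points, and summing over all $\kk$ iterations gives a total deletion of $O(\kk \sep^d \alpha) = O(n/c)$. Choosing $c$ larger than twice the packing constant makes this at most $n/2$, so at the start of each iteration there remain at least $\alpha$ live points and the call to \thmref{main:m:disk} succeeds.

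Finally, each invocation of \thmref{main:m:disk} runs in expected $O(n)$ time, and each shell deletion can be performed in $O(n)$ time by a single scan; summing over $\kk$ iterations yields the claimed expected running time $O(n\kk)$, and the quality bound $\min_i \cardin{\CL_i} = \alpha = \Omega\bigl(n/(\kk\sep^d)\bigr)$ holds by construction.
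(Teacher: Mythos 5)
Your proposal is correct and follows essentially the same approach as the paper: greedily extract a ball containing $\alpha$ points via \thmref{main:m:disk}, delete an enlarged ``safety shell'' around it to guarantee semi-separation from all later clusters, and use a packing argument to bound the total number of deleted points by $O(\kk\sep^d\alpha)$. Your version is if anything a touch more careful: you cover the enlarged ball by balls of radius $r_i/3$ (strictly below $r_i/2 \leq \ropt$), so each cover element unambiguously holds fewer than $\alpha$ points, and you place the tuning constant $c$ in the denominator of $\alpha$ and take it large, which is the correct direction (the paper's phrasing ``$c$ sufficiently large'' with $c$ in the numerator reads as a slip that should be ``sufficiently small'').
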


\begin{proof}
    Let $\alpha = \ceil{c n / (\kk \sep^d)}$, where $c$ is some
    appropriate constant. Let $\PS_0 = \PS$. In the $i$\th iteration,
    $2$-approximate the smallest ball containing $\alpha$ points of
    $\PS_{i-1}$, and let $\ball_i$ be this ball. Let $\CL_i$ be
    $\alpha$ points of $\PS_{i-1}$ contained in $\ball_i$.  Set
    $\PS_i = \PS_{i-1} \setminus \ball_i'$, where
    $\ball_i' = (2 \sep + 2) \ball_i$ is the scaling of $\ball_i$
    around its center by a factor of $2 \sep + 2$. The algorithm
    repeats this extraction step $\kk$ times.

    Observe, that $\ball_i'$ can be covered by $O( \sep^d)$ balls of
    radius $\radiusX{\ball_i}/2$, and as each such ball contains at
    most $\alpha$ points, it follow that
    $\cardin{\PS_i} \geq \cardin{\PS_{i-1}} - O(\sep^d \alpha)$. As
    such, if $c$ is chosen to be sufficiently large, we have that
    $\cardin{\PS_{\kk-1}} \geq \alpha$, which ensures the algorithm
    succeeds in extracting the $\kk$ clusters.

    By construction, for any $i < j$, we have that
    $\dmY{\CL_{i}}{\CL_j} \geq 2 \sep\radiusX{ \ball_i} \geq
    \sep\diamX{ \CL_i} $, which implies that the two sets are semi
    $\sep$-separated.
\end{proof}

We readily get the same result for the colored version.
\begin{lemma}
    \lemlab{h:k:sep:colored}%
    Given $\PS_1, \ldots, \PS_\kk \subseteq \Re^d$ be $\kk$ sets of
    $n$ points each, and a parameter $\sep$, one can compute, in
    $O(n \kk^2 )$ expected time, a semi $\sep$-separated $\kk$ sets
    $\CL_1, \ldots, \CL_\kk$, such that
    $\min_i \cardin{\CL_i} = \Omega\bigl(n / (\kk \sep^d ) \bigr)$ and
    $\CL_i \subseteq \PS_i$, for all $i$.
\end{lemma}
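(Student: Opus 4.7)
The plan is to adapt the algorithm of \lemref{h:k:sep} to the colored setting; the direct port, however, fails. If one iterates over colors in a fixed order and in iteration $i$ extracts a ball from $\PS_i$ avoiding the previously removed regions $\ball_1',\ldots,\ball_{i-1}'$, there is no a priori bound on $\cardin{\PS_i\cap\ball_j'}$, since $\ball_j$ was optimized for color $j$ and may happen to cover essentially all of $\PS_i$. The fix is to let the algorithm choose, at each step, which color to process, following a \emph{smallest-first} rule.

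For each color $c$, maintain a residual set $\PSC_c \subseteq \PS_c$, initially $\PS_c$. Set $\alpha = \ceil{\beta n/(\kk\sep^d)}$ for a sufficiently small constant $\beta > 0$, and run $\kk$ iterations. In iteration $i$, for every still-active color $c$, apply \thmref{main:m:disk} to $\PSC_c$ to obtain a ball $\ball_i^{(c)}$ of radius at most $2\roptY{\PSC_c}{\alpha}$ containing $\alpha$ points of $\PSC_c$. Let $c_i^\star$ be the color minimizing $\radiusX{\ball_i^{(c)}}$, set $\ball_i = \ball_i^{(c_i^\star)}$, take $\CL_i$ to be $\alpha$ points of $\PSC_{c_i^\star}\cap\ball_i$, deactivate $c_i^\star$, and update $\PSC_c \leftarrow \PSC_c\setminus \ball_i'$ for every remaining $c$, where $\ball_i' = (2\sep+2)\ball_i$.

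Semi-separation follows exactly as in \lemref{h:k:sep}: for $i<i'$ the set $\CL_{i'}$ is disjoint from $\ball_i'$, so $\dmY{\CL_i}{\CL_{i'}}\geq (2\sep+1)\radiusX{\ball_i}\geq \sep\diamX{\CL_i}$, which dominates $\sep\min(\diamX{\CL_i},\diamX{\CL_{i'}})$ regardless of which diameter is smaller. The main obstacle is the feasibility bound $\cardin{\PSC_c}\geq \alpha$ for every active color at every iteration, and the smallest-first rule is exactly what makes it work. Suppose color $c$ survives to iteration $i$, and fix an earlier $j<i$. Because $c_j^\star$ was chosen over $c$, the optimum radius for $\alpha$ points in $\PSC_c$ at iteration $j$ was at least $\radiusX{\ball_j}/2$, and hence every ball of radius $\radiusX{\ball_j}/2$ contained at most $\alpha$ points of $\PSC_c$ at that time. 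Covering $\ball_j'$ by $O(\sep^d)$ such balls shows that the number of points of $\PS_c$ first deleted in iteration $j$ is $O(\sep^d\alpha)$. Summing over the $\kk$ iterations yields $\cardin{\PSC_c}\geq n - O(\kk\sep^d\alpha)\geq \alpha$, provided $\beta$ is sufficiently small.

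The running time is $\kk$ iterations, each applying \thmref{main:m:disk} at most $\kk$ times on a set of size at most $n$, for an expected total of $O(n\kk^2)$.
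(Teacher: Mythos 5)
Your algorithm is exactly the paper's: in each iteration compute, for every still-active color, a $2$-approximate smallest ball enclosing $\alpha$ residual points, select the globally smallest such ball, deactivate its color, and delete the $(2\sep+2)$-scaled ball from the remaining colors' residual sets. You also correctly fill in the detail the paper leaves implicit --- that the smallest-first choice is what guarantees $\roptY{\PSC_c}{\alpha}\ge \radiusX{\ball_j}/2$ for every surviving color $c$, which in turn bounds the loss per iteration by $O(\sep^d\alpha)$ --- so the proposal matches the intended proof.
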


\begin{proof:in:appendix}{\lemref{h:k:sep:colored}}
    Initially, all the $\kk$ point sets $\PS_1, \ldots, \PS_\kk$ are
    active.

    Similar to the algorithm of \lemref{h:k:sep}, in the $i$\th
    iteration, the algorithm $2$-approximates for each \emph{active}
    set $\PS_i$ the smallest ball containing
    $\alpha = \ceil{c n / (\kk \sep^d)}$ of this set, where $c$ is
    some appropriate constant. The algorithm picks the smallest such
    ball to be $\ball_i$. For simplicity of exposition, assume
    $\ball_i$ was computed for $\PS_i$. Set $\CL_i$ to be $\alpha$
    points of $\PS_i$ that lie inside $\ball_i$, and mark $\PS_i$ as
    inactive. The algorithm now removes all the points in the active
    sets that are inside the ball $(2 + 2\sep) \ball_i$, and continues
    to the next iteration.

    The correctness of this algorithm follows the same argument is in
    \lemref{h:k:sep}.
\end{proof:in:appendix}

\subsection{Strongly separated pairs using quorum %
   clustering}
\seclab{quorum}

\begin{theorem}
    \thmlab{s:sep:k:d}%
    Given a set $\PS$ of $n$ points in $\Re^d$ with spread $\spreadC$,
    and parameters $\kk$ and $\sep$, one can compute $\kk$ disjoint
    subsets, $\CL_1, \ldots, \CL_\kk \subseteq \PS$, in $O(n \log n)$
    time, each of size at least

    \begin{equation*}
        \Omega_d\bigl( n / (\kk \sep^d \log \spreadC) \bigr),\hfill
    \end{equation*}

    such that these sets are strongly/well $\sep$-separated.
\end{theorem}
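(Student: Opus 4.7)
My plan is to run quorum clustering on $\PS$ with quota $\npts = \Theta_d\bigl( n / (\kk \sep^d \log \spreadC) \bigr)$, group the resulting quorum clusters into the $O(\log \spreadC)$ epochs defined in \secref{tools}, use pigeonhole to locate a dense epoch, and then greedily select $\kk$ quorum clusters from it whose enclosing balls are pairwise far apart. The output clusters $\CL_1, \ldots, \CL_\kk$ are just the selected $\PS_i$'s, each of size exactly $\npts$, which matches the target lower bound. Quorum clustering runs in $O(n \log n)$ time via \thmref{main:m:disk}, and the remaining steps are linear in the number of quorum clusters, so the total running time will be $O(n \log n)$.

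\textbf{Dense epoch and greedy.} The total number of quorum clusters is at most $n/\npts = O_d(\kk \sep^d \log \spreadC)$. Since \obsref{monotone}(B) guarantees the radii at least double between consecutive epochs, there are only $T = O(\log \spreadC)$ epochs, so by pigeonhole some epoch $\epochS^*$ contains $\Omega_d(\kk \sep^d)$ quorum clusters. Let $r^* = \max_{\ball_i \in \epochS^*} \radiusX{\ball_i}$; by \obsref{monotone}(A) every ball of $\epochS^*$ has radius in $[r^*/64, r^*]$, so every corresponding cluster has diameter at most $2r^*$. I then run a greedy selection on $\epochS^*$: repeatedly pick any undiscarded ball $\ball_i$, commit $\PS_i$ to the output, and discard every remaining ball of $\epochS^*$ whose center lies within distance $D = 4(\sep+1) r^*$ of the center of $\ball_i$. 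The chosen balls have pairwise center distance at least $D$, so the committed clusters satisfy
\begin{equation*}
    \dmY{\CL_i}{\CL_j} \;\geq\; D - 2r^* \;\geq\; 2 \sep\, r^* \;\geq\; \sep \cdot \max_\ell \diamX{\CL_\ell},
\end{equation*}
which is precisely strong (and hence also well) $\sep$-separation.

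\textbf{Counting and the main obstacle.} To guarantee that the greedy produces at least $\kk$ clusters, I must bound how many balls of $\epochS^*$ a single pick can discard. Each discarded ball has its center inside a Euclidean ball of radius $D = O(\sep\, r^*)$ about the picked center, hence the discarded balls themselves lie in a ball of radius $O(\sep\, r^*) + r^* = O(\sep\, r^*)$; on the other hand, each of them has radius at least $r^*/64$, and by \lemref{depth:limit} every point of $\Re^d$ is covered by at most $d^{O(d)}$ balls of $\epochS^*$. A routine volume argument then bounds the number of discarded balls by $\bigl(O(\sep)\bigr)^d \cdot d^{O(d)} = O_d(\sep^d)$. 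Consequently the greedy extracts at least $|\epochS^*| / O_d(\sep^d) = \Omega_d(\kk)$ clusters, and choosing the hidden constant in $\npts$ small enough forces this count to be $\geq \kk$. The main obstacle is exactly this volume-plus-depth count: without \lemref{depth:limit} many quorum balls of an epoch could pile up in one spot, each greedy pick could kill arbitrarily many candidates, and the pigeonhole gain would collapse. Everything else reduces to invoking the quorum clustering machinery of \secref{tools} as a black box.
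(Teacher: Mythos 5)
Your proposal is correct and follows essentially the same approach as the paper: both run quorum clustering with quota $\Theta_d\bigl(n/(\kk\sep^d\log\spreadC)\bigr)$, pigeonhole over the $O(\log\spreadC)$ epochs to find one with $\Omega_d(\kk\sep^d)$ balls, and then greedily extract well-separated balls, using \lemref{depth:limit} plus a packing/volume argument to bound the number of balls discarded per pick by $O_d(\sep^d)$. The only negligible difference is a cosmetic one in how the constants in the separation distance $D$ are set up (and a minor mis-citation: the $O(n\log n)$ running time for quorum clustering comes from \cite{cdhks-gqsa-05, hr-nplta-15}, not \thmref{main:m:disk}, which is the tight-ball-extraction subroutine).
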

\begin{proof}
    Let $\alpha = \floor{c n / (\kk \sep^d \log \spreadC)}$, for an
    appropriate constant $c$.  We compute the quorum clustering of
    $\PS$ for the parameter $\alpha$, in $O( n \log n )$ time
    \cite{hr-nplta-15}.  By the definition of $\alpha$, there are
    $N = \ceil{n/\alpha} = \Theta( c k \sep^d \log \spreadC )$ balls
    computed by the quorum clustering.  Let $r_i$ be the radius of the
    first ball in the $i$\th epoch. The radiuses of the balls in the
    $i$\th epoch is between $[r_i/8, 8r_i]$.

    Observe that $r_{i+1} > 4r_{i}$ by construction, which readily
    implies that the number of epochs is at most $\log_2 \spreadC$. As
    such, there must be an epoch that contains at least
    $M = \Theta( c k\sep^d )$ balls. Consider a ball $\ball$ in this
    epoch, and let $\pp$ be its center and $r$ be its radius. All the
    balls in distance $\leq s\cdot 4r_i$ from it are contained in a
    ball $\ballA$ of radius $(8+2\sep)r_i$ centered at $\pp$.  By
    \lemref{depth:limit}, the number of balls of this epoch covering
    any point in $\ballA$ is bounded by $d^{O(d)}$. As such, the total
    number of balls that are not $\sep$-separated from $\ball$ is at
    most $(8+2\sep)^dd^{O(d)}$. As such, we add $\ball$ to
    $\Clustering$, and remove all the balls contained in
    $\ballA$. Clearly, we can repeat this process
    $M/(8+2\sep)^dd^{O(d)}$ times. This quantity is at least $\kk$,
    for $c$ sufficient large, as desired.

    As such, the algorithm compute the epoch with the most balls, and
    repeatedly pick a ball, add it to the clustering, and remove all
    the balls (in this epoch) that are not $\sep$-separated from the
    set of balls picked so far. After $\kk$ iterations, the resulting
    set of $\kk$ balls are all pairwise strongly $\sep$-separated, as
    desired. This later part of the algorithm can be computed in
    linear time using a grid, and we omit the straightforward details.
    The quorum clustering can be computed in $O(n \log n)$ time
    \cite{hk-drhrp-14}, the algorithm as described above is no more
    than computing an appropriate net of the centers of the balls in
    the large epoch, and this can be done in linear time
    \cite{hr-nplta-15}. Using the same techniques as
    \cite{hr-nplta-15}, it is straightforward to compute, in the
    linear time, the points $\PS$ that are contained in the selected
    balls.
\end{proof}

\subsection{Colored well-separated clustering}

\begin{lemma}
    \lemlab{k:well:sep}%
    Let $\PS_1, \ldots, \PS_\kk \subseteq \Re^d$ be $\kk$ sets of $n$
    points each, such that the spread of $\PS = \cup_i \PS_i$ is
    $\spreadC$.  For any parameter $\sep > 0$, one can compute, in
    $O(n \kk \log \spreadC)$ time, $\kk$ clusters
    $\CL_i \subseteq \PS_i$, for $i=1,\ldots, k$, such that these
    clusters are well $\sep$-separated, and
    $\min_i \cardin{\CL_i} = \Omega \bigl( {n}/(\kk \sep^d
    \log{\spreadC}) \bigr)$.
\end{lemma}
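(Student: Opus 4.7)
The plan is to mimic the quorum-clustering strategy behind \thmref{s:sep:k:d}, but run it separately for each color and then perform a greedy cross-color selection. Set $\alpha = \Omega\bigl(n / (\kk \sep^d \log \spreadC)\bigr)$ with a small enough hidden constant (depending on $d$). For each color $i$, compute a quorum clustering of $\PS_i$ with parameter $\alpha$, obtaining $\Omega(\kk \sep^d \log \spreadC)$ balls partitioned into at most $\log \spreadC$ epochs by \obsref{monotone}. Pigeonhole then yields, for each color $i$, an epoch $e_i$ containing $M = \Omega(\kk \sep^d)$ balls whose radii all lie within a constant factor of some characteristic radius $r_i$.

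Next, relabel the colors so that $r_1 \leq r_2 \leq \cdots \leq r_\kk$, and pick clusters greedily in this order. When processing color $j$, the algorithm looks for a ball $\ball_j \in e_j$ whose center $c_j$ is far from every previously chosen center $c_1, \ldots, c_{j-1}$. Because for $i \leq j$ both $\diamX{\CL_i}$ and $\diamX{\CL_j}$ are $O(r_j)$ (by the epoch property and the ordering), the well-separation condition $\dmY{\CL_i}{\CL_j} \geq \sep \cdot \max\bigl(\diamX{\CL_i}, \diamX{\CL_j}\bigr)$ reduces to requiring $\normX{c_i - c_j} = \Omega(\sep \cdot r_j)$. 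Crucially, this condition only references $r_j$ and does not grow as later, larger clusters are processed, which is why ordering by characteristic radius is essential.

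The main obstacle is bounding, for each fixed $i < j$, the number of balls of $e_j$ whose centers lie within distance $O(\sep \cdot r_j)$ of $c_i$. This is exactly where \lemref{depth:limit} enters: every point is covered by at most $d^{O(d)}$ balls of $e_j$, so a standard packing/volume argument (each such ball has radius $\Theta(r_j)$ and is contained in the ball of radius $O(\sep \cdot r_j)$ around $c_i$) shows the count is at most $\sep^d \cdot d^{O(d)}$. Summing over the at most $\kk - 1$ previously processed colors gives $\kk \sep^d \cdot d^{O(d)}$ forbidden balls, which is strictly less than $M$ once the hidden constant in $\alpha$ is chosen small enough (depending on $d$). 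Thus a valid $\ball_j$ always exists, and taking $\CL_j$ to be any $\alpha$ points of $\PS_j$ inside $\ball_j$ yields the desired well-separated collection.

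The running time is dominated by the $\kk$ quorum clusterings, plus a greedy cross-color selection that can be implemented via a grid in linear time per color (as in the end of the proof of \thmref{s:sep:k:d}), together giving the claimed $O(n \kk \log \spreadC)$ bound.
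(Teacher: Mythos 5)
Your proposal is correct, and it reaches the same bound by a genuinely different route than the paper. The paper throws all balls from all $\kk$ quorum clusterings into a single pool, repeatedly extracts the \emph{globally smallest} remaining ball, outputs its cluster, deletes the entire ball set of that color, and then prunes every ball (of any remaining color, across \emph{all} epochs) that fails well-separation against the selected ball; the size guarantee follows by arguing that each such iteration removes at most $O(\sep^d \log\spreadC)$ balls per remaining color -- one packing bound per epoch, summed over $O(\log\spreadC)$ epochs. You instead commit, before any selection, to a \emph{single} heavy epoch per color via pigeonhole, sort the colors by that epoch's characteristic radius, and run a greedy that lives entirely inside the pre-chosen epochs; your forbidden-ball count is $\sep^d\cdot d^{O(d)}$ per previously placed center, summed over $\kk$ colors, which is beaten by the $\Omega(\kk\sep^d)$ balls available in each epoch. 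Both arguments get the same monotonicity (the ``max of the two diameters'' in well-separation is always controlled by the thing currently being placed -- the paper by picking the smallest ball, you by sorting the epoch radii), and both invoke \lemref{depth:limit} for the packing step. Your version is arguably cleaner because it parallels the single-color argument of \thmref{s:sep:k:d} more directly and avoids summing the pruning count over epochs; the paper's version avoids having to commit to an epoch per color up front. One small point worth making explicit in a polished write-up: when $\sep$ is small the forbidden region around $c_i$ has radius $O((\sep+1)r_j)$ rather than $O(\sep r_j)$, giving a forbidden count of $O((\sep+1)^d)\cdot d^{O(d)}$; this is still dominated by $M=\Omega(\kk\sep^d/c)$ for $c$ small enough, so the conclusion is unaffected.
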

\begin{proof}
    We compute quorum clustering of each point set $\PS_i$, with sets
    being of size $\alpha = c{n}/(\kk \sep^d \log{\spreadC})$, where
    $c$ is some constant to be specified shortly. Let $\BallSet_i$ be
    the set of balls computed for the quorum clustering of $\PS_i$,
    for $i=1,\ldots, \kk$. Let $\BallSet = \cup_i \BallSet_i$. In the
    $i$\th iteration, the algorithm repeatedly pick the smallest ball
    $\ball_i$ in $\BallSet$, and assume it is a cluster of
    $\PS_i$. The algorithm add the points of $\PS_i$ covered by
    $\ball_i$ to the output, and remove all the balls of $\BallSet_i$
    from $\BallSet$. In addition, all the balls that are not well
    $\sep$-separated from $\ball_i$ in $\BallSet$ are removed from the
    set. The algorithm repeats till $\kk$ iterations are complete.

    Observe, that in an epoch of $\BallSet_j$ that involves balls that
    are of the roughly the same radius or larger than $\ball_i$,, for
    $j \geq i$, there could be at most $O(\sep^d)$ balls that are not
    well $\sep$-separated from $\ball_i$, by the packing property of
    \lemref{depth:limit}. As such, at most $O( \sep^d \log \spreadC)$
    balls are being thrown out of $\BallSet_j$, for $j >i$, in the
    $i$\th iteration. As such, for a sufficiently small $c$, the final
    set $\BallSet_\kk$ is not empty, and the algorithm indeed computes
    $\kk$ clusters each of size at least $\alpha$.

    The resulting clusters are well $\sep$-separated by construction.
\end{proof}

\section{Conclusions}
\seclab{conclusions}

We studied the problem of computing large separated clusters for a
give point set, and provided upper bounds and algorithms for computing
such ``high'' quality clusterings.%

The clustering computed can be used to seed other clustering
algorithms, such as the $\kk$-means method. We leave this as an open
problem for further research.

Another interesting open problem is to compute the best such
clustering for a given input. It is easy to show that this problem, in
general, is as hard as computing the largest clique in a graph. As
such, approximation algorithms approximating the optimal clustering
that run in polynomial time in $d, \kk$ and $\sep$ are potentially
interesting.

\InESAVer{\newpage}%

\BibLatexMode{\printbibliography}

\BibTexMode{%
   \SoCGVer{%
      \bibliographystyle{plain}%
   }%
   \NotSoCGVer{%
      \bibliographystyle{alpha}%
   }%
   \bibliography{good_pairs}
}

\InESAVer{\appendix
   \InsertAppendixOfProofs

}

\end{document}